\newcommand{\sectionNotes}{\phantomsection\section*{Notes}\addcontentsline{toc}{section}{Notes}\markright{\textsc{\@chapapp{} \thechapter{} Notes}}}
\newcommand{\sectionExercises}[1]{\phantomsection\section*{Exercises}\addcontentsline{toc}{section}{Exercises}\markright{\textsc{\@chapapp{} \thechapter{} Exercises}}}
\newcommand{\defeq}{\vcentcolon\equiv}  
\def\prd#1{\@ifnextchar\bgroup{\prd@parens{#1}}{%
    \@ifnextchar\sm{\prd@parens{#1}\@eatsm}{%
    \@ifnextchar\prd{\prd@parens{#1}\@eatprd}{%
    \@ifnextchar\;{\prd@parens{#1}\@eatsemicolonspace}{%
    \@ifnextchar\\{\prd@parens{#1}\@eatlinebreak}{%
    \@ifnextchar\narrowbreak{\prd@parens{#1}\@eatnarrowbreak}{%
      \prd@noparens{#1}}}}}}}}
\def\prd@parens#1{\@ifnextchar\bgroup%
  {\mathchoice{\@dprd{#1}}{\@tprd{#1}}{\@tprd{#1}}{\@tprd{#1}}\prd@parens}%
  {\@ifnextchar\sm%
    {\mathchoice{\@dprd{#1}}{\@tprd{#1}}{\@tprd{#1}}{\@tprd{#1}}\@eatsm}%
    {\mathchoice{\@dprd{#1}}{\@tprd{#1}}{\@tprd{#1}}{\@tprd{#1}}}}}
\def\@eatsm\sm{\sm@parens}
\def\prd@noparens#1{\mathchoice{\@dprd@noparens{#1}}{\@tprd{#1}}{\@tprd{#1}}{\@tprd{#1}}}
\def\lprd#1{\@ifnextchar\bgroup{\@lprd{#1}\lprd}{\@@lprd{#1}}}
\def\@lprd#1{\mathchoice{{\textstyle\prod}}{\prod}{\prod}{\prod}({\textstyle #1})\;}
\def\@@lprd#1{\mathchoice{{\textstyle\prod}}{\prod}{\prod}{\prod}({\textstyle #1}),\ }
\def\tprd#1{\@tprd{#1}\@ifnextchar\bgroup{\tprd}{}}
\def\@tprd#1{\mathchoice{{\textstyle\prod_{(#1)}}}{\prod_{(#1)}}{\prod_{(#1)}}{\prod_{(#1)}}}
\def\dprd#1{\@dprd{#1}\@ifnextchar\bgroup{\dprd}{}}
\def\@dprd#1{\prod_{(#1)}\,}
\def\@dprd@noparens#1{\prod_{#1}\,}
\def\@eatnarrowbreak\narrowbreak{%
  \@ifnextchar\prd{\narrowbreak\@eatprd}{%
    \@ifnextchar\sm{\narrowbreak\@eatsm}{%
      \narrowbreak}}}
\def\@eatlinebreak\\{%
  \@ifnextchar\prd{\\\@eatprd}{%
    \@ifnextchar\sm{\\\@eatsm}{%
      \\}}}
\def\@eatsemicolonspace\;{%
  \@ifnextchar\prd{\;\@eatprd}{%
    \@ifnextchar\sm{\;\@eatsm}{%
      \;}}}
\def\lam#1{{\lambda}\@lamarg#1:\@endlamarg\@ifnextchar\bgroup{.\,\lam}{.\,}}
\def\@lamarg#1:#2\@endlamarg{\if\relax\detokenize{#2}\relax #1\else\@lamvar{\@lameatcolon#2},#1\@endlamvar\fi}
\def\@lamvar#1,#2\@endlamvar{(#2\,{:}\,#1)}
\def\@lameatcolon#1:{#1}
\def\lamu#1{{\lambda}\@lamuarg#1:\@endlamuarg\@ifnextchar\bgroup{.\,\lamu}{.\,}}
\def\@lamuarg#1:#2\@endlamuarg{#1}
\def\fall#1{\forall (#1)\@ifnextchar\bgroup{.\,\fall}{.\,}}
\def\exis#1{\exists (#1)\@ifnextchar\bgroup{.\,\exis}{.\,}}
\def\sm#1{\@ifnextchar\bgroup{\sm@parens{#1}}{%
    \@ifnextchar\prd{\sm@parens{#1}\@eatprd}{%
    \@ifnextchar\sm{\sm@parens{#1}\@eatsm}{%
    \@ifnextchar\;{\sm@parens{#1}\@eatsemicolonspace}{%
    \@ifnextchar\\{\sm@parens{#1}\@eatlinebreak}{%
    \@ifnextchar\narrowbreak{\sm@parens{#1}\@eatnarrowbreak}{%
        \sm@noparens{#1}}}}}}}}
\def\sm@parens#1{\@ifnextchar\bgroup%
  {\mathchoice{\@dsm{#1}}{\@tsm{#1}}{\@tsm{#1}}{\@tsm{#1}}\sm@parens}%
  {\@ifnextchar\prd%
    {\mathchoice{\@dsm{#1}}{\@tsm{#1}}{\@tsm{#1}}{\@tsm{#1}}\@eatprd}%
    {\mathchoice{\@dsm{#1}}{\@tsm{#1}}{\@tsm{#1}}{\@tsm{#1}}}}}
\def\@eatprd\prd{\prd@parens}
\def\sm@noparens#1{\mathchoice{\@dsm@noparens{#1}}{\@tsm{#1}}{\@tsm{#1}}{\@tsm{#1}}}
\def\lsm#1{\@ifnextchar\bgroup{\@lsm{#1}\lsm}{\@@lsm{#1}}}
\def\@lsm#1{\mathchoice{{\textstyle\sum}}{\sum}{\sum}{\sum}({\textstyle #1})\;}
\def\@@lsm#1{\mathchoice{{\textstyle\sum}}{\sum}{\sum}{\sum}({\textstyle #1}),\ }
\def\tsm#1{\@tsm{#1}\@ifnextchar\bgroup{\tsm}{}}
\def\@tsm#1{\mathchoice{{\textstyle\sum_{(#1)}}}{\sum_{(#1)}}{\sum_{(#1)}}{\sum_{(#1)}}}
\def\dsm#1{\@dsm{#1}\@ifnextchar\bgroup{\dsm}{}}
\def\@dsm#1{\sum_{(#1)}\,}
\def\@dsm@noparens#1{\sum_{#1}\,}
\def\wtype#1{\@ifnextchar\bgroup%
  {\mathchoice{\@twtype{#1}}{\@twtype{#1}}{\@twtype{#1}}{\@twtype{#1}}\wtype}%
  {\mathchoice{\@twtype{#1}}{\@twtype{#1}}{\@twtype{#1}}{\@twtype{#1}}}}
\def\lwtype#1{\@ifnextchar\bgroup{\@lwtype{#1}\lwtype}{\@@lwtype{#1}}}
\def\@lwtype#1{\mathchoice{{\textstyle\mathsf{W}}}{\mathsf{W}}{\mathsf{W}}{\mathsf{W}}({\textstyle #1})\;}
\def\@@lwtype#1{\mathchoice{{\textstyle\mathsf{W}}}{\mathsf{W}}{\mathsf{W}}{\mathsf{W}}({\textstyle #1}),\ }
\def\twtype#1{\@twtype{#1}\@ifnextchar\bgroup{\twtype}{}}
\def\@twtype#1{\mathchoice{{\textstyle\mathsf{W}_{(#1)}}}{\mathsf{W}_{(#1)}}{\mathsf{W}_{(#1)}}{\mathsf{W}_{(#1)}}}
\def\dwtype#1{\@dwtype{#1}\@ifnextchar\bgroup{\dwtype}{}}
\def\@dwtype#1{\mathsf{W}_{(#1)}\,}
\def\wtypeh#1{\@ifnextchar\bgroup%
  {\mathchoice{\@lwtypeh{#1}}{\@twtypeh{#1}}{\@twtypeh{#1}}{\@twtypeh{#1}}\wtypeh}%
  {\mathchoice{\@@lwtypeh{#1}}{\@twtypeh{#1}}{\@twtypeh{#1}}{\@twtypeh{#1}}}}
\def\lwtypeh#1{\@ifnextchar\bgroup{\@lwtypeh{#1}\lwtypeh}{\@@lwtypeh{#1}}}
\def\@lwtypeh#1{\mathchoice{{\textstyle\mathsf{W}^h}}{\mathsf{W}^h}{\mathsf{W}^h}{\mathsf{W}^h}({\textstyle #1})\;}
\def\@@lwtypeh#1{\mathchoice{{\textstyle\mathsf{W}^h}}{\mathsf{W}^h}{\mathsf{W}^h}{\mathsf{W}^h}({\textstyle #1}),\ }
\def\twtypeh#1{\@twtypeh{#1}\@ifnextchar\bgroup{\twtypeh}{}}
\def\@twtypeh#1{\mathchoice{{\textstyle\mathsf{W}^h_{(#1)}}}{\mathsf{W}^h_{(#1)}}{\mathsf{W}^h_{(#1)}}{\mathsf{W}^h_{(#1)}}}
\def\dwtypeh#1{\@dwtypeh{#1}\@ifnextchar\bgroup{\dwtypeh}{}}
\def\@dwtypeh#1{\mathsf{W}^h_{(#1)}\,}
\newcommand{\proj}[1]{\ensuremath{\mathsf{pr}_{#1}}\xspace}
\newcommand{\fst}{\ensuremath{\proj1}\xspace}
\newcommand{\snd}{\ensuremath{\proj2}\xspace}
\newcommand{\rec}[1]{\mathsf{rec}_{#1}}
\newcommand{\ind}[1]{\mathsf{ind}_{#1}}
\newcommand{\refl}[1]{\ensuremath{\mathsf{refl}_{#1}}\xspace}
\newcommand{\UU}{\ensuremath{\mathcal{U}}\xspace}
\newcommand{\typele}[1]{\ensuremath{\mathsf{Type}_{#1}}}
\newcommand{\prop}{\ensuremath{\mathsf{Prop}}\xspace}
\newcommand{\isprop}{\ensuremath{\mathsf{isProp}}}
\newcommand{\trunc}[2]{\mathopen{}\left\Vert #2\right\Vert_{#1}\mathclose{}}
\newcommand{\emptyt}{\ensuremath{\mathbf{0}}\xspace}
\newcommand{\unit}{\ensuremath{\mathbf{1}}\xspace}
\newcommand{\inlsym}{{\mathsf{inl}}}
\newcommand{\inrsym}{{\mathsf{inr}}}
\newcommand{\inl}{\ensuremath\inlsym\xspace}
\newcommand{\inr}{\ensuremath\inrsym\xspace}
\newcommand{\uset}{\ensuremath{\mathcal{S}et}\xspace}
\newbox\pbbox
\newcommand{\y}{\ensuremath{\mathbf{y}}\xspace}
\newcommand{\N}{\ensuremath{\mathbb{N}}\xspace}
\newcommand{\suc}{\mathsf{succ}}
\newcommand{\Z}{\ensuremath{\mathbb{Z}}\xspace}
\newcommand{\funext}{\mathsf{funext}}
\newcommand{\oftp}[3]{#1 \vdash #2 : #3}
\newcommand{\tmtp}[2]{#1 \mathord{:} #2}
\newcommand{\rintro}{\textsc{intro}}
\newcommand{\relim}{\textsc{elim}}
\newcommand{\rid}{\textsc{id}}
\newcommand{\isType}[1]{\mathsf{isType}_{#1}}
\newtheoremstyle{exampletheoremstyle}
{3pt}
{3pt}
{}
{}
{\itshape}
{:}
{.5em}
{}
\newtheoremstyle{mydef}
{\topsep}
{\topsep}%
{\setlist{itemsep=0pt}}
{}
{\bfseries}
{. \vspace{0.9ex}}
{\newline}
{%
\thmname{#1}~\thmnumber{#2}\thmnote{\ -\ #3}%
}%
\theoremstyle{mydef} 
  {\pushQED{\qed}\mydefn}
  {\popQED\endmydefn}
\theoremstyle{remark}
\let\c@equation\c@thm
\numberwithin{equation}{section}
\def\noteson{%
\gdef\note##1{\mbox{}\marginpar{\color{blue}\textasteriskcentered\ ##1}}}
\newcounter{symindex}
\newcommand{\catj}{\mathbb{J}}
\newcommand{\catset}{\uset}
\newcommand{\falg}{F\text{-}\mathbf{Alg}}
\newcommand{\talg}{T\text{-}\mathbf{Alg}}
\newcommand*{\relrelbarsep}{.386ex}
\newcommand*{\relrelbar}{%
	\mathrel{%
		\mathpalette\@relrelbar\relrelbarsep
	}%
}
\newcommand*{\@relrelbar}[2]{%
	\raise#2\hbox to 0pt{$\m@th#1\relbar$\hss}%
	\lower#2\hbox{$\m@th#1\relbar$}%
}
\providecommand*{\rightrightarrowsfill@}{%
	\arrowfill@\relrelbar\relrelbar\rightrightarrows
}
\providecommand*{\leftleftarrowsfill@}{%
	\arrowfill@\leftleftarrows\relrelbar\relrelbar
}
\providecommand*{\xrightrightarrows}[2][]{%
	\ext@arrow 0359\rightrightarrowsfill@{#1}{#2}%
}
\providecommand*{\xleftleftarrows}[2][]{%
	\ext@arrow 3095\leftleftarrowsfill@{#1}{#2}%
}
\newcommand{\qtext}[1]{\quad\text{#1}\quad}
\newcommand{\ent}{\vdash}
\newcommand{\jcomment}[1]{}
\newcommand{\scomment}[1]{}
\newcommand{\slscomment}[1]{}
\newcommand{\recpf}{\rec{+}^F}
\newcommand{\indpf}{\ind{+}^F}
\newcommand{\recps}{\rec{+}}
\newcommand{\recp}{\rec{+}}
\newcommand{\indps}{\ind{+}}
\newcommand{\inlf}{\inl^F}
\newcommand{\inrf}{\inr^F}
\newcommand{\inls}{\inl}
\newcommand{\inrs}{\inr}
\newcommand{\recn}{\rec{\N}}
\newcommand{\itprd}[1]{\textstyle\prod_{\{#1\}}}
\newcommand{\emb}{\hookrightarrow}
\newcommand{\natf}{\mathbb{N}^*}
\newcommand{\zerof}{0^*}
\newcommand{\sucf}{\suc^*}
\newcommand{\recnf}{\rec{\mathbb{N}}^*}
\newcommand{\sone}{\mathsf{S}^1}
\newcommand{\rece}{\rec{\emptyt}}
\newcommand{\vxymatrix}[1]{\vcenter{\xymatrix{#1}}}
\newcommand{\Prop}{\ensuremath{\mathsf{Prop}}} 
\newcommand{\isProp}{\ensuremath{\mathsf{isProp}}} 
\newcommand{\Set}{\ensuremath{\mathsf{Set}}} 
\newcommand{\isSet}{\ensuremath{\mathsf{isSet}}} 
\newcommand{\set}{\ensuremath{\mathsf{Set}}} 
\newcommand{\Id}{\mathtt{Id}}
\newcommand{\truncation}[1]{|\!| #1 |\!|}
\theoremstyle{plain}
\newtheorem{theorem}{Theorem}[section]
\theoremstyle{definition}
\theoremstyle{remark}
\newtheorem*{remark}{Remark}
\theoremstyle{plain}
\newtheorem{corollary}{Corollary}[section]
\title{\textbf{Impredicative Encodings of (Higher) Inductive Types}}
\date{\today}
\author{
  Steve Awodey\\
  \texttt{awodey@cmu.edu}
  \and
  Jonas Frey\\
  \texttt{jonasf@andrew.cmu.edu}
  \and
  Sam Speight\\
  \texttt{samuel.speight@cs.ox.ac.uk}
}
\begin{document}

\maketitle

\begin{abstract}
Postulating an impredicative universe in dependent type theory allows System F style encodings of finitary inductive types, but these fail to satisfy the relevant $\eta$-equalities and consequently do not admit dependent eliminators. To recover $\eta$ and dependent elimination, we present a method to construct \emph{refinements} of these impredicative encodings, using ideas from homotopy type theory.
We then extend our method to construct impredicative encodings of some \emph{higher} inductive types, such as $1$-truncation and the unit circle $\sone$.
\end{abstract}


\section{Introduction}

System F, also known as the `Girard-Reynolds polymorphic $\lambda$-calculus', goes back to \cite{girardoriginal} and \cite{reynoldsoriginal}. It extends the simply typed $\lambda$-calculus with universal quantification $\forall$ over types. Under the Curry-Howard correspondence \cite{howard:pat}, 
it is the type-theoretic analog of second-order propositional logic.


One of the remarkable things about System F is that it allows for the \emph{encoding} of types such as products, sums, natural numbers, and, more generally, \emph{finitary inductive types}. These encodings are called \emph{impredicative}, since in defining a specific type they quantify over the totality of \emph{all} types, which in particular contains the type which is  being defined.
For example, the type $\mathbb{N}$ of natural numbers is encoded in System F as
\[
\N_F\ \equiv\ \forall \,X.\, (X\rightarrow X)\rightarrow X\rightarrow X\,.
\]	
A well-known defect of such encodings, however, is that they do not satisfy the appropriate `$\eta$-rules', which are uniqueness principles stating that for every recursive definition there is only one function realizing it. 
One way to think about this failure of $\eta$ is that e.g.\ the type described by the formula $\N_F$ is `too large',  in that certain models~\cite{rummelhoff2004polynat} may contain non-standard elements which are not generated from the constructors. On the other hand, one can show using \emph{parametricity} arguments~\cite{reynolds1983types} that all the `named' elements are standard. This discrepancy has led to attempts to \emph{refine} the models by imposing parametricity, \emph{dinaturality}~\cite{BAINBRIDGE199035}, or \emph{realizability} \cite{carboni1988categorical} conditions.

System F-style, impredicative encodings can also be given in dependent type theory with an \emph{impredicative universe} (such as the \emph{calculus of constructions}~\cite{COQUAND198895}), but here a further consequence of the failure of uniqueness of the recursors is that the encoded types do not admit {dependent elimination rules}, which are necessary for proofs by induction, and thus indispensable for the development of mathematics in type theory.

In this article we present a new technique to restore $\eta$-rules by refining impredicative encodings in a way that is related to the parametricity and dinaturality techniques mentioned above, but in contrast to them, takes place \emph{inside the type theory}, so that the $\eta$-equalities on the refined types become \emph{provable}, rather than just admissible w.r.t.\ a model. To be more specific, for our refined encodings we can prove \emph{propositional} versions of the required $\eta$-equalities, which -- as shown in~\cite{ags:it-hott,AGS2} -- are sufficient to derive the existence of the corresponding dependent eliminators.

The system of type theory that we use is similar to the calculus of constructions with a hierarchy of predicative universes (like in older versions of the Coq proof assistant), but
in contrast to the calculus of constructions we assume that the lowest, impredicative universe is closed under small sums and identity types.

Some of our arguments make use of  the `uniqueness of identity proofs' principle, but instead of postulating it globally we exploit the notion of \emph{$0$-types} from homotopy type theory to state the relevant results for that level of the definable hierarchy of h-levels.
This also allows us to explore the applicability of our techniques to \emph{higher inductive types} in the later sections.

We emphasize that our main contribution is to give impredicative encodings of  inductive types in dependent type theory that satisfy the relevant dependent elimination rules (along with the other rules).
This seems to solve a long-standing problem, considered in \cite{10.1007/3-540-45413-6_16}, of giving ``2nd-order encodings''
 satisfying induction.  We do this by  ``refining'' the impredicative encodings inside the type theory using identity types. 
This is in contrast to interpreting the system into a  model and having the universal properties hold there, which is the spirit of the ``parametric polymorphism'' approach set out in \cite{reynolds1983types}, where the universal quantification of System F is ``cut down'' on interpretation.
\paragraph*{Overview}
Section \ref{sec_system} specifies the system of type theory in which our work takes place, recalls the definition of $n$-types in homotopy type theory, and introduces the (pre-)category $\catset$ of small $0$-types. We also include a brief description of  how System F can be translated into our setting.
In Section~\ref{sec:basic} we show how to refine the System F encodings of some non-recursive, inductive sets such as the binary sum $A+B$ of two sets to recover appropriate $\eta$-rules, using an argument based on a type-theoretic version of the Yoneda lemma.
Section~\ref{sec:generalz} gives a related technique
involving impredicative encodings of initial algebras  in order to achieve the same result for genuinely inductive sets such as the type $\N$ of natural numbers. 
In Section~\ref{sec:1types} we show how these techniques generalize from sets ($0$-types) to $1$-types, giving impredicative encodings of some of the recently introduced \emph{higher inductive types} \cite[Ch.~6]{HoTTbook1}. Specifically, we give encodings for the $1$-truncation and the unit circle. We believe that these encodings also illuminate the ones for conventional inductive types. Indeed, our general methodology is very much informed by the HoTT point of view.
Finally, Section~\ref{sec:limits} addresses issues such as limitations of our techniques, open questions, consistency and the existence of models, and future work.

%

\section{The System of Type Theory}\label{sec_system}

Although our results are not all ``higher dimensional'' in nature, our work is best understood in the context of \emph{homotopy type theory (HoTT)}; we refer to~\cite{HoTTbook1} as our standard reference for conventions and terminology. We thus work in a system of dependent type theory with products $\prod_{x:A} B(x)$, strong sums $\sum_{x:A} B(x)$, (intensional) identity types $\Id_X(x,y)$, and function extensionality, as in~\cite{HoTTbook1}.  However, we  make no use of the univalence axiom. We usually write simply $x=y$ for $\Id_X(x,y)$, as is now common. We then distinguish notationally between \emph{propositional} equality $x = y$ and \emph{definitional} equality $x\equiv y$.

\paragraph*{Universes} We augment the hierarchy of \emph{predicative} universes 
\[\UU_0: \UU_1: \UU_2: \dots\] 
assumed in \cite{HoTTbook1} by adding a {single \emph{impredicative} universe}  $\UU$ at the bottom.
\[\UU:\UU_0: \UU_1: \UU_2: \dots\]
This new universe $\UU$ is also closed under dependent sums and identity types, like the $\UU_i$, but instead of the usual (predicative) product formation rule
\[
	\inferrule*
{{\Gamma \vdash A:\UU_i } \\ \oftp{\Gamma, x:A}{B}{\UU_i } }
{\oftp\Gamma{\tprd{x:A}B}{\UU_i }}
\]
it satisfies the \emph{impredicative} product formation rule
\[
\inferrule*
{\oftp{\Gamma, x:A}{B}{\UU} }
{\oftp\Gamma{\tprd{x:A}B}{\UU}}
\]
which is stronger since there is no size restriction on the $A$.

Note that in the following, when writing $\UU$, we always mean the \emph{impredicative} universe -- in this respect we deviate from \cite{HoTTbook1}, in that we do \emph{not} use $\UU$ as a placeholder for an unspecified $\UU_i$.

\paragraph*{n-Types} Recall from \cite[7.1]{HoTTbook1} the hierarchy of n-types (Voevodsky: ``h-levels''): $X$ is called a \emph{(-1)-type}, or \emph{proposition}, if it satisfies $\tprd{x,y:X}x=y$; it is a \emph{0-type}, or \emph{set}, if its identity types are always propositions; and generally, it is an \emph{(n+1)-type} if its identity types are always n-types. Formally, let
\begin{align*}
\isProp(X)\ :\equiv&\ \tprd{x,y:X}\, x=y\\
\isSet(X)\ :\equiv&\ \tprd{x,y:X}\, \isProp(x=y)\\
\isType{(n+1)}(X)\ :\equiv&\ \tprd{x,y:X}\, \isType{n}(x=y)\,,
\end{align*}
and
\begin{align*}
\Prop\ :\equiv&\ \tsm{X:\UU}\ \isProp(X) \\
\Set\ :\equiv&\ \tsm{X:\UU}\ \isSet(X) \\
\typele{n}\ :\equiv&\ \tsm{X:\UU}\ \isType{n}(X)\,.
\end{align*}

Note that the types $\isProp(X)$, $\isSet(X)$, etc., are themselves propositions~\cite[Theorem~7.1.10]{HoTTbook1}, so that $\Prop$, $\Set$, etc., are \emph{subtypes} of $\UU$, in the sense that the first projection from the respective $\Sigma$-type is an embedding. 

We normally suppress the coercion $\fst:\typele{n}\to\UU$ and treat expressions of type $\typele{n}$ as if they were themselves types. Thus, in particular if $X:\UU\ent A(X)$ is a $\UU$-indexed family of types then the expression $\tprd{X:\typele{n}}A(X)$ is a shorthand for 
\[\tprd{X:\typele{n}}\,A(\fst\,X),\] 
which in turn is equivalent to 
\[\tprd{X:\UU}\,\isType{n}(X)\to A(X).\]

Moreover, since the $n$-types are closed under arbitrary products~\cite[Theorem~7.1.9]{HoTTbook1} (and suppressing the unpacking and repacking of dependent pairs), the rule
\[
\inferrule*
{\oftp{\Gamma,x:A}{B}{\typele{n}} }
{\oftp\Gamma{\tprd{x:A}B}{\typele{n}}}
\]
is admissible for all $n$. Thus, in sum, we can view the types $\typele{n}$ as \emph{impredicative 
subuniverses} of $\UU$.

We exploit the convenience of having an impredicative universe that is closed under most constructions, by working mostly inside $\UU$ -- as opposed to the usual methodology of predicative systems, where a hierarchy of universes are used ``parametrically''. Accordingly, we adopt the convention that the terms ``proposition'', ``set'', ``n-type'', etc., refer only to types in $\UU$. 

\paragraph*{The precategory $\catset$ of sets in $\UU$}

The subtype $\set$ of $0$-types in $\UU$ gives rise to a  {precategory} $\uset$ where 
\[\uset_0=\set \quad\text{and}\quad \hom(A,B)=(A\to B)\] 
for $A,B: \set$ (\cite[Example~9.1.5]{HoTTbook1}). 

As pointed out in~\cite[Section~10.1.1]{HoTTbook1}, this precategory is (small) \emph{complete} in that it admits equalizers (defined using $\Sigma$- and identity-types) and small products (given by type-theoretic products).  Since in our setting $\UU$ is an impredicative universe, $\catset$ even admits `large' products -- i.e.\ products indexed by arbitrary types -- which we make use of in what follows.

\paragraph*{Translation of System F}
There is an evident syntactic translation $t$ from System F (see Appendix \ref{appendix_F}) to our system of dependent type theory. Types of the form $A \rightarrow B$ in System F are translated to $A^t \rightarrow B^t$, and types of the form $\forall X. B$ are translated to $\prod_{X:\UU}B^t$, where $A^t$ and $B^t$ are the translations of the System F types $A$ and $B$. The translation of terms is equally obvious.

Similarly, we can restrict the translation by replacing $\UU$ above with any of the impredicative subuniverses $\typele{n}$ of $n$-types.  For example, we may define the translation $t_0$ with:
\[
(\forall X. B)^{t_0} \defeq \prod_{X:\Set}B^{t_0}.
\]
In this sense, we can speak of System F encodings of inductive types in our system of dependent type theory.

Generally, if $x:A \vdash P(x)$ is a (family of) propositions, then as above $\sum_{x:A}P(x)$ is a subtype of $A$ via the first projection.  Our impredicative encodings of inductive types will be subtypes of the usual System F encodings in this sense; we say that they ``sharpen'' or ``refine'' the usual encodings.

\section{Basic Set Encodings}\label{sec:basic}


As stated in the foregoing section, the impredicative universe $\UU$ allows us to give `System F style' encodings of certain inductive types. In this section we explain how these encodings fall short of the usual inductive types assumed in dependent type theory. We then indicate a way to remedy these shortcomings to a certain extent. 

We start with the sum of two types $A$ and $B$, whose System F encoding we translate into type theory, as explained above, by replacing the quantification over types by a dependent product over $\UU$:
\begin{equation}\label{eq:Fplus}
A +^F\!B\ :\equiv\ \prod_{X:\,\UU} (A\rightarrow X)\rightarrow(B\rightarrow X)\rightarrow X\,.
\end{equation}
It is easy to see that this encoding admits injections\footnote{We adopt the convention that arguments in braces $\{-\}$ in types denote \emph{implicit} arguments, meaning that we may write e.g.\ $\inlf a$ instead of $\inlf A\,B\,a$.}
\begin{equation}\label{eq:finj}
\begin{aligned}
\inlf\ &:\equiv\ \lam{\,A\,B\,a\, X\, f\, g}f\,a\ : \ \textstyle\prod_{\{A, B:\UU\}}A\to (A+^F\!B)\\
\inrf\ &:\equiv\ \lam{\,A\,B\,b\, X\, f\, g}g\,b\ : \ \textstyle\prod_{\{A, B:\UU\}}B\to (A+^F\!B)
\end{aligned}
\end{equation}
and a recursor
\begin{equation}\label{eq:frec}
\begin{aligned}
\recpf\ &:\equiv\ \lam{\,A\,B\,C\,f\, g\, \phi}\phi\,C\,f\,g \\
&: \itprd{A,B,C:\UU}(A\to C)\to (B\to C)\to (A+^F\!B)\to C
\end{aligned}
\end{equation}
satisfying the \emph{definitional} $\beta$-equalities 
\[\recpf\,f\,g\,(\inlf\, a)\equiv f\,a\qtext{and}\recpf\,f\,g\,(\inrf\,b)\equiv g\,b\] 
for all $A,B,C:\UU$, $f:A\to C$, $g:B\to C$, $a:A$ and $b:B$. 

However there are several problems:
\begin{enumerate}
	\item\label{one} the recursor only allows us to define functions into types in~$\UU$,
	\item\label{two} the $\eta$-rule
	\begin{equation*}\label{eqn:eta+}
	f = \recpf(f\circ \inlf)(f\circ\inrf),
	\end{equation*}
	where $f:A+^F\!B\to C$, doesn't  hold, even \emph{propositionally},
	\item\label{three} the encoding does not admit a dependent eliminator, which would have to have the type
	\begin{align*}
\indpf\ :\ 
& \textstyle\prod\{A,B:\UU\}\\
& \textstyle\prod\{C:A+^F\!B\to\UU\}\\
& \textstyle\prod(f:\tprd{a:A}C(\inlf\, a))\\
& \textstyle\prod(g:\tprd{b:B}C(\inrf\, b))\\
& \textstyle\prod(x:A+^F\!B),\quad C\,x.
\end{align*}
	and satisfy the propositional equalities
\[\quad\indpf\,f\,g\,(\inlf a)\ =\ f\,a\qtext{and}\indpf\,f\,g\,(\inrf b)\ =\ g\,b\] 
for all appropriately typed $f$, $g$, $a$, and $b$.
%
\end{enumerate}

We defer discussion of issue \ref{one} in general to section \ref{sec:limits} below. 
Issues \ref{two} and \ref{three} are related by the general theory developed in \cite{ags:it-hott,AGS2}: briefly, in the present setting the dependent elimination rule is equivalent to the $\eta$-rule. In the following we give a way to restore the propositional $\eta$-rule for sums of sets by restricting the product in~\eqref{eq:Fplus} to $\set$ and taking a suitable subtype.

\subsection{Refining the encoding}

Restricting the dependent product in~\eqref{eq:Fplus} to the subuniverse of propositions, we obtain a well-known encoding of logical disjunction:
\[
A \vee B\ \simeq\ \prod_{X:\prop} (A\rightarrow X)\rightarrow(B\rightarrow X)\rightarrow X \qquad\text{for } A,B:\prop
\]
Observe that $A\vee B$ \emph{is} a proposition, because $\prop$ is closed under $\Pi$-types.

In a similar vein it seems natural to define a sum operation
\begin{equation}\label{eq:plus*}
A +^* B\ :\equiv\ \prod_{X:\,\set} (A\rightarrow X)\rightarrow(B\rightarrow X)\rightarrow X\quad\text{for } A,B\in\set
\end{equation}
of sets $A$, $B$ by restricting the impredicative product to types $X:\set$. This type also admits injections and a recursor
\begin{equation*}\label{eq:*inj}
\begin{aligned}
\inl^*\ &:\equiv\ \lam{\,A\,B\,a\, X\, f\, g}f\,a\ : \ \textstyle\prod_{\{A, B:\set\}}A\to (A+^*B)\\
\inr^*\ &:\equiv\ \lam{\,A\,B\,b\, X\, f\, g}g\,b\ : \ \textstyle\prod_{\{A, B:\set\}}B\to (A+^*B)\\
\recp^*\ &:\equiv\ \lam{\,A\,B\,C\,f\, g\, \phi}\phi\,C\,f\,g \\
&: \itprd{A,B,C:\set}(A\to C)\to (B\to C)\to (A+^*B)\to C
\end{aligned}
\end{equation*}
analogous to those for $A+^FB$. The $\eta$-rule still fails for $A+^*B$, but as we shall see, it is now possible to carve out a \emph{subtype} for which it is satisfied.

As a warm-up exercise, consider first the unary case. For $A:\set$ there is an embedding-retraction pair
\begin{equation}\label{eq:embedretract}
\vcenter{\xymatrix{
 A \ar[rr]^-{e} \ar[rrd]_{=} && \ar[d]^{r} \tprd{X:\set} (A\rightarrow  X)\rightarrow X\\
 && A\,,
 }}
 \end{equation}
 where $e(a)\ \equiv \lambda X f. f(a)$ and 
$r(\alpha)\ \equiv\ \alpha_A(\mathrm{id}_A)$.
 
Now, a term $\alpha: \prod_{X:\set} (A\rightarrow X)\rightarrow X$ is a \emph{family of maps} (switching notation for emphasis), $$\alpha_X : X^A \to X\,, \qquad X:\set\,.$$  We can cut down the type $\prod_{X:\set} (A\rightarrow  X)\rightarrow X$ to (one equivalent to) the image of $e$ in \eqref{eq:embedretract} by requiring that the family of maps $\alpha_X$ be \emph{natural in $X$} in the sense that for all sets $X, Y$ and all maps $f : X \to Y$, the following square commutes.
\begin{equation}\label{eq:naturality}
\vxymatrix{
X^A \ar[r]^{\alpha_X} \ar[d]_{f^A} & \ar[d]^{f} X \\
Y^A \ar[r]_{\alpha_Y}   & Y
}
\end{equation}
Here, $f^A \equiv \lam{g}f\circ g$ is the action of the  functor $(-)^A : \uset\to\uset$ on $f : X\to Y$.

The \emph{sharper encoding} $A^+$ of $A$ is now:
\begin{equation}\label{eq:encodeA}
A^+\defeq\sum_{\alpha : A^*} N(\alpha)\,,
\end{equation}
\begin{align*}
&\text{where}& A^* &\defeq \prod_{X:\set} (A\rightarrow X)\rightarrow X\\
&\text{and}&N(\alpha)&\defeq\prod_{X,Y:\set} \prod_{f:X\rightarrow Y} \alpha_Y \circ f^A = f\circ \alpha_X\,.
\end{align*}
Note that $\fst:A^+\emb A^*$ is an embedding since $N(\alpha)$ is a proposition for all $\alpha$.

\begin{theorem}[Basic Lemma]\label{lem:basic}
For any set $A$, we have $A\simeq A^+$.
\end{theorem}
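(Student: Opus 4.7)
The plan is to exhibit a quasi-inverse pair by refining the embedding-retraction pair in~\eqref{eq:embedretract}. Define
\[
e^+ : A \to A^+, \qquad e^+(a) \defeq (\lam{X f} f(a),\, \nu_a)
\]
where $\nu_a : N(\lam{X f} f(a))$ is the proof that, for any $f:X\to Y$ and $g:A\to X$, the two sides $(\lam{h}h(a))_Y(f\circ g)$ and $f\cdot((\lam{h}h(a))_X(g))$ both reduce definitionally to $f(g(a))$, so $\nu_a$ is given by $\lam{X Y f}\refl{}$ (modulo function extensionality to promote pointwise equality in the fiber). In the other direction, define
\[
r^+ : A^+ \to A, \qquad r^+(\alpha,n) \defeq \alpha_A(\mathrm{id}_A),
\]
which is the restriction along $\fst$ of the retraction $r$ from~\eqref{eq:embedretract}.

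For one composite, $r^+(e^+(a)) \equiv \mathrm{id}_A(a) \equiv a$ holds definitionally, giving $r^+ \circ e^+ = \idfunc[A]$. For the other composite, I need to show $e^+(r^+(\alpha,n)) = (\alpha,n)$ in $A^+$. Since $N(\alpha)$ is a proposition (it is a product of identity types of a set), the pair equality reduces, via the characterization of identities in $\Sigma$-types, to first-component equality in $A^*$. By function extensionality twice, first-component equality reduces to the pointwise statement
\[
\prd{X:\set}\prd{f:A\to X}\ f(\alpha_A(\mathrm{id}_A)) \;=\; \alpha_X(f)\,.
\]
This is exactly the naturality condition $n$ of $\alpha$ applied to the morphism $f : A \to X$ and the element $\mathrm{id}_A : A^A$: naturality gives $\alpha_X(f^A(\mathrm{id}_A)) = f(\alpha_A(\mathrm{id}_A))$, and $f^A(\mathrm{id}_A) \equiv f\circ \mathrm{id}_A \equiv f$.

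The only step requiring any care is the last equality: the naturality square~\eqref{eq:naturality} expressed in~$n$ needs to be \emph{instantiated} at the correct data — namely, taking the source to be $A$ itself, the codomain to be the arbitrary set $X$, the morphism to be $f$, and evaluating at $\mathrm{id}_A$ — and then combined with function extensionality to lift the resulting family of pointwise identities to an identity in $A^*$, and finally packaged into a $\Sigma$-identity using that $N(\alpha)$ is a proposition. This is the standard type-theoretic form of the Yoneda argument, with the naturality subtype $A^+$ playing the role of the representable functor. I do not expect any genuine obstacle; the work is essentially bookkeeping with \funext and with $\fst: A^+\emb A^*$ being an embedding of the subtype defined by the propositional naturality predicate.
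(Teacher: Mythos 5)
Your proposal is correct and follows essentially the same route as the paper: the same embedding $a\mapsto\lam{Xf}f(a)$ (checked natural by a definitional computation), the same retraction $\alpha\mapsto\alpha_A(\mathrm{id}_A)$, and the same Yoneda-style use of naturality instantiated at $f:A\to X$ and $\mathrm{id}_A$ to establish the nontrivial composite, with the propositionality of $N(\alpha)$ reducing everything to the first component.
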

\begin{proof}
First, we show that $e : A \emb A^*$ factors through 
$A^+\emb A^*$.
For $a:A$, the family $e(a) : A^*$ consists of the evaluations 
$e(a)_X : A^X \to X$, where $g \mapsto g(a)$. If $f:X\to Y$, then indeed
\[\begin{split}
(e(a)_Y \circ f^A)(g) =  e(a)_Y (f^A)(g)) =  e(a)_Y (f\circ g)\\
= f(g(a)) = f(e(a)_X(g)) = (f\circ \alpha_X)(g) \,.
\end{split}
\]
Now let $\alpha : A^*$ be natural in the sense expressed in \eqref{eq:encodeA}, and define $a_0 :\equiv r(\alpha) \equiv \alpha_A(1_A) : A\,.$  We claim that $\alpha = e(a_0)$, which suffices since $A^*$ is a set. Indeed, take any $X$ and $g:X\to A$, then we have 
\[
e(a_0)_X(g) = g(a_0) =  g(\alpha_A(1_A))
= \alpha_X(g^A(1_A)) = \alpha_X(g)\,,
\]
using the naturality of $\alpha$ in the third step.
\end{proof}

\begin{remark} The categorically minded reader will recognize that the previous theorem is an instance of the Yoneda lemma. Indeed, \eqref{eq:encodeA} is the type of all natural transformations from the (covariant) representable functor $(-)^A$ to the identity functor $I : \uset\to\uset$, so by Yoneda we indeed have 
\[
\mathrm{Nat}\big(\,(-)^A,\, I\,\big) \simeq I(A) \simeq A
\,.
\]
But since we do not require this level of generality here, we will not develop the required details.
\hfill$\diamondsuit$ 
\end{remark}
Taking inspiration from the previous theorem, we return to the preliminary encoding $A+^*B$ from \eqref{eq:plus*}, and in order to recover the $\eta$-rule define a subtype $A+B\hookrightarrow A+^*B$ by imposing a suitable naturality condition. 

We start with the observation that if we already had such a type $A+B$, then $(A+B)^*$ would be equivalent to $(A+^* B)$:
\begin{equation*}\label{eqn:A+B*}
\begin{aligned}
(A+B)^*\ \equiv&\ \prod_{X:\set} ((A+B)\rightarrow X)\rightarrow X\\
\simeq&\ \prod_{X:\set} (A\rightarrow X)\times(B\rightarrow X)\rightarrow X\\
\simeq&\ \prod_{X:\set} (A\rightarrow X)\to(B\rightarrow X)\rightarrow X\ \equiv\ A+^*B
\end{aligned}
\end{equation*}
Now by transporting the naturality condition of the lemma along the equivalence, we can \emph{define} 
$A+B$ as a subtype of $A+^* B$, where the defining condition can again be read as a naturality property, but now one that does not assume the existence of $A+B$. Specifically, we define
\begin{equation}\label{eqn:encodingA+B}
\begin{split}
A+B &\defeq\sum_{\alpha : A+^* B}\,N(\alpha)\qquad\text{where}
\\
N(\alpha)&\defeq\prod_{X,Y:\mathsf{Set}}\,\prod_{f:X\rightarrow Y}\,\prod_{\substack{h:A\to X\\ k:B\to X}}
f(\alpha_Xhk)=\alpha_Y(f\!\circ\! h)(f\!\circ\! k).
\end{split}
\end{equation}
If we substitute $(\inl^*\, a)$ or $(\inr^*\, b)$ for $\alpha$ in $N(\alpha)$, the two sides of the identity type become definitionally equal, whence we can {refine} the injections defined in~\eqref{eq:finj} to get the following.
\begin{equation*} 
\begin{array}{ll}
\inls:\ \textstyle\prod_{\{A, B:\set\}}A\to A+ B
	&\inrs\,:\,\textstyle\prod_{\{A, B:\set\}}B\to A+ B\\
\inls\,a \defeq (\inl^*\,a,\,\lam{XYfhk}\mathsf{refl})&
	\inrs\,b \defeq (\inr^*\,b,\,\lam{XYfhk}\mathsf{refl})
\end{array}
\end{equation*}
The recursor~\eqref{eq:frec} gets replaced by
\begin{equation*}
\begin{aligned}
&\recps\ :\itprd{A,B,C:\set}(A\to C)\to (B\to C)\to A+B\to C\\
&\recps \defeq \lam{\,A\,B\,C\,f\,g\,\xi}\;(\fst\,\xi)_C\,f\,g
\end{aligned}
\end{equation*}
With these definitions we can now prove the following.
\begin{theorem} For all sets $A$ and $B$, the encoding \eqref{eqn:encodingA+B} of the sum $A+B$, along with the structure $\inls$ , $\inrs$, and $\recps$ just defined, satisfy
\begin{enumerate}
\item 
the \emph{definitional} $\beta$-rules
\[\recps\,f\,g\,(\inls\, a)\equiv f\,a\qtext{and}\recps\,f\,g\,(\inrs\,b)\equiv g\,b\] 
for all $C:\set$, $f:A\to C$, $g:B\to C$, $a:A$, $b:B$\,,
\item 
the 
\emph{propositional} $\eta$-rule
\[\recps(f\circ \inls)(f\circ\inrs)=f\]
for all $C:\set$ and $f:A+B\to C$.
\end{enumerate}
\end{theorem}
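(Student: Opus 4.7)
The plan is to handle the two parts separately, with the $\beta$-rules being essentially unfolding and the $\eta$-rule requiring a Yoneda-style argument using the naturality witness.

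For the $\beta$-rules, I would simply unfold the definitions:
\[\recps\,f\,g\,(\inls\, a)\ \equiv\ \bigl(\fst(\inls\,a)\bigr)_C\,f\,g\ \equiv\ (\inl^*\,a)_C\,f\,g\ \equiv\ f\,a,\]
and symmetrically for $\inrs$, so both hold definitionally on the nose.

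For the $\eta$-rule, let $f:A+B\to C$ and $\xi\equiv(\alpha,n):A+B$ with $\alpha:A+^*B$ and $n:N(\alpha)$. By function extensionality it suffices to prove $\alpha_C(f\circ\inls)(f\circ\inrs)=f\,\xi$. The first key step is to instantiate the naturality witness $n$ at $X\defeq A+B$, $Y\defeq C$, the map $f$, $h\defeq\inls$, $k\defeq\inrs$; this requires observing that $A+B$ is a set (being a subtype, via the embedding $\fst$, of $A+^*B$, which lies in $\Set$ since $\Set$ is closed under the impredicative $\Pi$). Naturality then yields
\[\alpha_C(f\circ\inls)(f\circ\inrs)\ =\ f\bigl(\alpha_{A+B}\,\inls\,\inrs\bigr),\]
so it remains to identify $\alpha_{A+B}\,\inls\,\inrs$ with $\xi$ in $A+B$.

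This is the main obstacle and the place where the Yoneda-like trick really bites. Since $\fst:A+B\hookrightarrow A+^*B$ is an embedding, it suffices to show the first projections agree in $A+^*B$. I would apply $n$ a second time, now with $Y\defeq A+^*B$ and $f\defeq\fst$, using the definitional equalities $\fst\circ\inls\equiv\inl^*$ and $\fst\circ\inrs\equiv\inr^*$ to obtain
\[\fst\bigl(\alpha_{A+B}\,\inls\,\inrs\bigr)\ =\ \alpha_{A+^*B}\,\inl^*\,\inr^*.\]
Finally, to identify the right-hand side with $\alpha$, I would use function extensionality in the variables $X:\Set$, $h:A\to X$, $k:B\to X$ and invoke naturality a third time, taking as the naturality map the evaluation $\mathsf{ev}_{X,h,k}:A+^*B\to X$, $\beta\mapsto\beta_X\,h\,k$; the calculation $\mathsf{ev}_{X,h,k}\circ\inl^*\equiv h$ and $\mathsf{ev}_{X,h,k}\circ\inr^*\equiv k$ then collapses $(\alpha_{A+^*B}\,\inl^*\,\inr^*)_X\,h\,k$ to $\alpha_X\,h\,k$. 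Assembling these three applications of naturality and the embedding property of $\fst$ gives the desired propositional $\eta$-equality. Conceptually, this mirrors the Yoneda argument in Theorem~\ref{lem:basic}: the extra naturality condition in the refined encoding \eqref{eqn:encodingA+B} is exactly what is needed to force every element to be in the image of the ``canonical'' map $\inls\sqcup\inrs$.
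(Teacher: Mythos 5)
Your proposal is correct and follows essentially the same route as the paper: $\beta$ by unfolding, and $\eta$ by first reducing (via one application of naturality at the target $C$) to the special case $\recps\,\inls\,\inrs=\idfunc[A+B]$, which is then handled through the embedding $\fst:A+B\emb A+^*B$ and function extensionality. The only cosmetic difference is in the last step: the paper identifies $\fst(\alpha_{A+B}\,\inls\,\inrs)_X\,h\,k$ with $\alpha_X\,h\,k$ by a single naturality instance along $\recps\,h\,k:A+B\to X$, whereas you factor that map as $\mathsf{ev}_{X,h,k}\circ\fst$ and apply naturality twice; both are valid.
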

\begin{proof}
The $\beta$ rules follow mechanically by unfolding definitions. 

For the $\eta$ rule, we first prove a special case, namely 
\begin{equation}\label{eq:weaketa}
\recps\,\inls\,\inrs = \mathrm{id}_{A+B}: A+B \to A+B\,.
\end{equation}
By function extensionality and $\Sigma$-induction  it is sufficient to show that $\alpha_{A+B}\,\inls\,\inrs=(\alpha,p)$ for all $\alpha: A+^* B$ and $p:N(\alpha)$. Since
$A+B\emb A+^*B$ is an embedding, this reduces to $\fst(\alpha_{A+B}\,\inls\,\inrs)=\alpha$, and again by 
function extensionality this follows from $$\fst(\alpha_{A+B}\,\inls\,\inrs)_X\,f\,g=\alpha_X\,f\,g$$
for
$X:\set$, $f: A\to X$ and $g: B\to X$. This is shown by
\begin{align*}
\fst(\alpha_{A+B}\,\inls\,\inrs)_X\,f\,g 
&\equiv \recps fg(\alpha_{A+B}\,\inls\,\inrs)\\
&=\alpha_X(\recps fg\circ\inls)(\recp fg\circ\inrs) \\
&=\alpha_Xfg\,,
\end{align*}
where the second equality is given by $p(\recps fg)\,\inls\,\inrs$ and the third one follows from $\beta$ and function extensionality.

For the general case, let again $\alpha:A+^*B$ and $p:N(\alpha)$. We have:
\begin{align*}
	\recps (f\circ \inls)(f\circ\inrs)(\alpha,p) &\equiv \alpha_C(f\circ \inls)(f\circ\inrs)\\
	&=f(\alpha_C(\inls)(\inrs)) &&(\text{by }p)\\
	&\equiv f(\recps \,\inls\,\inrs\,(\alpha,p))\\
	&=f(\alpha,p)&&(\text{by }\eqref{eq:weaketa})\,,
	\end{align*}
	which proves the claim.
\end{proof}
We emphasize that it is crucial to the proof that the encoding \eqref{eqn:encodingA+B} of $A+B$ is itself a set, so that $A+B$ is in the range of the variable $X:\set$ -- and of course, so that the sum of two sets is again a 0-type. This is ensured by the fact that the n-types are closed under $\Sigma$-, $\Pi$-, and identity-types, and of course, the impredicativity of \UU.  In more detail, in \eqref{eq:plus*}, the $X$ ranges over sets, and thus the type $(A\rightarrow X)\to(B\rightarrow X)\rightarrow X$ is a 0-types.  But then by impredicativity of $\Set$, the entire type $A+^*B$ is a set. In $N(\alpha)$, the identity type is a proposition, since it is over the type of functions $(A\to X)\to(B\to X)\to Y$, which is a 0-type.  Thus $N(\alpha)$ is itself a proposition, whence $A+B$ is a set. 

Finally, as mentioned in \Cref{sec:basic}, the \emph{induction principle} below follows from recursion together with the uniqueness of the recursor (the $\eta$-rule)~\cite{ags:it-hott,AGS2}:
\begin{align*}
\indps\ :\ 
& \textstyle\prod\{A,B:\Set\}\\
& \textstyle\prod\{C:A+B\to\Set\}\\
& \textstyle\prod(f:\tprd{a:A}C(\inls\, a))\\
& \textstyle\prod(g:\tprd{b:B}C(\inrs\, b))\\
& \textstyle\prod(x:A+B),\quad C\,x,
\end{align*}
with propositional equalities
\[\quad\indps\,f\,g\,(\inls\,a)\ =\ f\,a\qtext{and}\indps\,f\,g\,(\inrs\,b)\ =\ g\,b\] 
for all appropriately typed $f$, $g$, $a$, and $b$.

Having done the unary and binary case of sums of sets, we might as well do the nullary one, too. 
The System F style encoding of the empty type  $\emptyt$ in $\UU$, given by
\[
\emptyt\;\defeq\;\tprd{X:\UU}X\,,
\]
admits a recursor 
\[\rece:\equiv\,\lam {Xc}c_X\,:\,\itprd{X:\UU}\emptyt\to X\,,\] 
and it turns out that in this case we don't need any refinement, since we can already derive that $\emptyt$ is a proposition and that $\rece$ satisfies an $\eta$ equality -- indeed, we have 
\[
\lam{c d}c (c=d):\prod_{c,d:\emptyt}(c=d)\ \equiv\,\isprop(\emptyt)\,,
\]
and the eliminator $\rece$ is unique, by
\[
\funext(\lam{c}c(c_X=fc))\,:\,{\rece}X =f
\]
for any $X:\UU$ and $f : \emptyt\to X$.


\paragraph*{Other non-recursive 0-types}

Unlike $\emptyt$, the terminal set $\unit$ does not have the System F form $\unit_F :\equiv\,\prod_{X:\UU}X\to X$ (uniqueness of the maps $X\to \unit_F$ fails), but instead can be encoded as a (-1)-type via the familiar
\begin{equation}\label{eq:true}
\unit\,:\equiv\, \prod_{p:\Prop} p\to p\,.
\end{equation}
Indeed, this is easily seen to be terminal for all $X:\UU$.

The method of adding a naturality condition can be used to encode the \emph{set-truncation} $\truncation{A}_0$ 
of an arbitrary type $A$. Indeed, we can simply take
\begin{equation}\label{eq:settrunc}
\begin{split}
\truncation{A}_0\ &:\equiv \sum_{\alpha : A^*} \prod_{X,Y:\set} \prod_{f:X\rightarrow Y} \alpha_Y \circ f^A = f\circ \alpha_X\,,\\
\text{where}\quad A^*\,&:\equiv\,\prod_{X:\set} (A\rightarrow X)\rightarrow X\,,
\end{split}
\end{equation}
as in the Basic Lemma \ref{lem:basic}, since $A^*$ is a set even when $A$ is not one.

Observe how this generalizes the well-known \cite{hofmann1995extensional,Awodey:2004:PT:1094469.1094484} \cite[Exercise~3.15]{HoTTbook1} 
\emph{propositional truncation} of a type $A$,
\[
\truncation{A} =\ \prod_{X:\prop}\, (A\rightarrow X)\rightarrow X\,.
\]

We only mention that it is also possible to give correct impredicative encodings of set-quotients \cite[6.10]{HoTTbook1}, as well as general coequalizers of sets, by related methods.

\section{General inductive sets}\label{sec:generalz}

While sums and truncations are viewed as inductive types in modern terminology, the classical idea of
an inductive type involves generation from constants by repeated application of constructors. A well 
understood class of inductive types in type theory and category theory are \emph{W-types}~\cite[pg.~43]{martin-lof:bibliopolis}, which are generated from a family of constructors of specified -- possibly infinite -- arities. 

Inductive types of this kind with only \emph{finitely}
many constructors, each of finite arity, can be encoded in System F~\cite[Section 11.5]{girard:proofsandtypes}, and translating these encodings into type theory by quantifying over the impredicative universe $\set$ leads again to types which admit the correct constructors and recursors but fail to satisfy the appropriate $\eta$-rules.

On the other hand, it is known from category theory that W-types can be understood as initial algebras
of so called \emph{polynomial functors}~\cite{MoerdijkPalmgren2002,gambino2003wellfounded}.
%
In the following we show how a categorical construction of initial algebras relying on the impredicativity of $\set$ gives rise to subtypes of the System F style encodings satisfying $\eta$.
We elaborate this idea using as running example the inductive type of natural numbers, but the method is easily seen to generalize.
%
%

\subsection{Initial algebras of endofunctors}\label{sec:inalg}

As pointed out in Section~\ref{sec_system}, the precategory $\catset$ is complete in the very strong sense that it has all equalizers (constructed using $\Sigma$- and  identity-types) as well as products of families of objects indexed by \emph{arbitrary} types. It is an old observation by Hyland~\cite[Section~3.1]{HYLAND1988135} that this implies the existence of initial algebras for arbitrary endofunctors, at least in the related semantical setting of certain kinds of internal categories.
In the following we give explicit type-theoretic descriptions of the required limits and initial algebras by unwinding the categorical definitions. 

First, observe that limits over arbitrary (pre)category-indexed diagrams can be expressed using products and equalizers, as usual.  Specifically, let $\catj$ be a precategory: an arbitrary type of objects $\catj_0$ and a family of \emph{sets} of arrows $\hom:\catj_0\times\catj_0\to\set$, equipped with the usual composition and unit structure, and satisfying the usual equations on these, which are \emph{propositional}, because the $\hom(i,j)$ are sets for all $i,j:\catj_0$.  A $\catj$-indexed `diagram' is just a functor $D:\catj\to\catset$, which can also be defined as usual, since the values $D_i$ are all sets.
 The limit of $D$ is the equalizer of the two maps
\begin{align*}
p,q\;\;:\;\;&\prd{i:\catj_0}D_i\;\rightrightarrows\;\prd{i,j:\catj_0}\prd{u:\hom(i,j)} D_j\\
p\;\phi\; i\; j\; u \;&:\equiv\; D(u)(\phi_i)\\
q\;\phi\; i\; j\; u \;&:\equiv\; \phi_j\,, 
\end{align*} 
which is given explicitly by the type
\begin{equation}\label{eq:lim}
\begin{split}
\varprojlim_{i} D_i\;\defeq\;&\sm{\phi:D^*}\prd{i,j : \catj_0}\prd{u:\hom(i,j)}\;D_u(\phi_i) = \phi_j,\\
\text{where}\quad D^*\;\defeq\;&\prd{i:\catj_0}D_i\,,
\end{split}
\end{equation}
together with projections
\[
\pi_j\,\defeq\; \lam{\xi}(\fst\xi)_j\,:\,\varprojlim_{i} D_i\to D_j\quad\text{for } j:\catj_0.
\]
Crucially for proving the $\eta$ rule in \Cref{sec:nat}, observe that $\varprojlim_{i} D_i$ is a set, since all the $D_i$ are sets, $\set$ is  impredicative, and $\varprojlim_{i} D_i$ is therefore a sum of a family of propositions over a set.

Now recall that, given an endofunctor $F:\catset\to\catset$, the category $\falg$ of \emph{$F$-algebras} has  as objects pairs $(X:\set,\alpha:FX\to X)$, and as morphisms from $(X,\alpha)$ to $(Y,\beta)$ the functions $f:X\to Y$ satisfying $f\circ\alpha=\beta\circ Ff$.  Thus, type-theoretically, we have the precategory:
\begin{align*}
(\falg)_0 \;&\defeq\; \sm{X:\set}FX\to X\,,\\
\hom\big(\,(X,\alpha),\, (Y,\beta)\,\big) \;&\defeq\; \sm{f:X\to Y}f\circ\alpha=\beta\circ Ff\,.
\end{align*}
The forgetful functor $U:\falg\to\catset$ is just the first projection, that is: 
\begin{align*}
U(X,\alpha)\ &\defeq\ X &&\text{for}\quad(X,\alpha):(\falg)_0\,,\\
\qquad U(f,p)\ &\defeq\ f&&\text{for}\quad(f,p):\hom\big(\,(X,\alpha),\, (Y,\beta)\,\big).
\end{align*}
It is well-known, and easy to prove, that the precategory $\falg$ inherits arbitrary limits from $\catset$, and these can be computed pointwise.  Thus $\falg$ has an initial object, which is just the limit of the identity functor. Since limits in $\falg$ are computed pointwise, they are preserved by $U$, which means that the initial algebra $(I,\, i : FI\to I)$ has as its underlying set $I$ the limit of the functor $U:\falg\to\catset$, which using \eqref{eq:lim} we can write explicitly as
\begin{align}\label{eq:F-algtype}
I\,\defeq\;&\sm{\phi:U^*}\mathsf{Lim}(\phi)
\end{align}
\begin{align*}
\text{where}\quad U^*\;\defeq\;&\prd{A:\falg}UA\,,\\
\text{and}\quad \mathsf{Lim}(\phi)\,\defeq\;&\prod_{(A,B:\falg)}\prod_{(f,p):\hom(A,B)}f(\phi_A)=\phi_B\,.
\end{align*}
Summarizing the foregoing discussion, we have the following.
\begin{theorem}\label{thm:initial}
For any functor $F : \catset \to \catset$, the category $\falg$ of $F$-algebras has an initial object
\[
i:FI \rightarrow I,
\]
where:
\begin{enumerate}
\item the set $I$ is given by the type \eqref{eq:F-algtype} and is the limit of the forgetful functor $\,U:\falg\to\catset$, and
\item the map $i: FI\to I$ is given by
\[
i(x)\,\defeq\,(\lam{A}(\snd A)(F(\pi_A)x),\,q(x))\,,
\]
where $q : \mathsf{Lim}(\lam{A}(\snd A)(F(\pi_A))$ is constructed from functoriality of $F$ and naturality of the limit cone.
\hfill$_\square$
\end{enumerate} 
\end{theorem}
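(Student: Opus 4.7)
The plan is to verify three things in sequence: that the set $I$ is indeed the limit of the forgetful functor $U : \falg \to \catset$; that the structure map $i : FI \to I$ is well-defined; and that the resulting pair $(I, i)$ is initial in $\falg$. The first of these is essentially definitional---the expression~\eqref{eq:F-algtype} for $I$ is the general limit formula \eqref{eq:lim} specialized to $\catj := \falg$ and $D := U$, with the $\pi_A(\phi, q) \defeq \phi_A$ giving the canonical projections.

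For well-definedness of $i$, the substantive content is to show that for each $x : FI$ the family $\lambda A.\,(\snd A)(F(\pi_A)\, x)$ lies in $\mathsf{Lim}$. Given any morphism $(g, p) : A \to B$ in $\falg$, one must verify $g((\snd A)(F(\pi_A)\, x)) = (\snd B)(F(\pi_B)\, x)$. Using the homomorphism condition $p : g \circ \snd A = \snd B \circ Fg$ together with functoriality of $F$, the left side equals $(\snd B)(F(g \circ \pi_A)(x))$, after which naturality of the limit cone along $(g, p)$---namely $g \circ \pi_A = \pi_B$---delivers the right side.

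For initiality, given any algebra $A = (Y, \beta)$, the candidate unique morphism is $h := \pi_A : I \to Y$. That $h$ is an $F$-algebra morphism is immediate by unfolding: $h(i(x)) \equiv \pi_A(i(x)) = (\snd A)(F(\pi_A)(x)) = \beta(F(h)(x))$, which is exactly the algebra morphism equation.

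The main obstacle is uniqueness. Given any homomorphism $f : (I, i) \to A$ in $\falg$, the map $f$ is itself an arrow of $\falg$, so naturality of the limit cone $\pi$ along $f$ yields $f \circ \pi_{(I, i)} = \pi_A = h$. It therefore suffices to prove $\pi_{(I, i)} = \mathrm{id}_I$. For this, note that by the existence argument just given (applied to each $B : \falg$) every projection $\pi_B : (I, i) \to B$ is itself an $F$-algebra morphism; naturality of $\pi$ along $\pi_B$ then gives $\pi_B \circ \pi_{(I, i)} = \pi_B$ for all $B$. Since $I$ is a $\Sigma$-type over $U^* \equiv \prd{A:\falg} UA$ whose second component is a proposition, $\fst : I \hookrightarrow U^*$ is an embedding, so the family of projections $\pi_B$ is jointly injective and $\pi_{(I, i)} = \mathrm{id}_I$ follows. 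Uniqueness of $h$ is then immediate.
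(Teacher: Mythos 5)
Your proof is correct, and it supplies exactly the content that the paper leaves implicit behind the closing $\square$: the paper's route is the categorical summary preceding the theorem---$\catset$ is complete, $\falg$ inherits limits pointwise (created/preserved by $U$), and the initial object is the limit of the identity functor on $\falg$, whose underlying set is therefore the limit of $U$---whereas you verify the universal property of the explicit construction directly in the type theory. The two arguments share the same mathematical core: your uniqueness step (apply the cone condition to an arbitrary algebra map $f:(I,i)\to A$ to get $f\circ\pi_{(I,i)}=\pi_A$, then show $\pi_{(I,i)}=\mathrm{id}_I$) is precisely the classical proof that a limit of the identity functor is initial, except that where the classical argument invokes the uniqueness clause of the limit's universal property to conclude $\pi_{(I,i)}=\mathrm{id}_I$, you exploit the concrete construction of $I$ as a subtype of $\prd{A:\falg}UA$: since $\mathsf{Lim}(\phi)$ is a proposition, $\fst$ is an embedding and the projections are jointly injective (using function extensionality). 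This buys a self-contained, formalization-ready proof that never needs the general lemmas about limits in categories of algebras; the paper's version buys generality and brevity by quoting them. Two small points you use silently and should make explicit in a written-up version: the cone equations $g\circ\pi_A=\pi_B$ are obtained from the pointwise data $q$ by funext (and transported through $F$'s action on morphisms via $\mathsf{ap}$, since functoriality here is only propositional), and initiality in the precategory $\falg$ requires uniqueness of the \emph{algebra morphism}, not just of its underlying function---which is fine because the homomorphism condition $f\circ i=\beta\circ Ff$ is an identity in a set of functions and hence a proposition, so $\hom((I,i),A)$ embeds into $I\to UA$.
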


We emphasize that the foregoing theorem is not merely semantically true in a certain model, but is provable in our system of type theory.  
In the following we use this construction of initial algebras to obtain an encoding of the type of natural numbers which refines the System F encoding.

\subsection{Natural numbers}\label{sec:nat}

The inductive type $\N$ of natural numbers is generated by the constructors
\[ 
0:\N\qquad\qquad\suc:\N\to\N.
\]
From this specification we can derive the set-level System F style encoding
\begin{equation}\label{eq:NatF}
\natf\ \defeq\ \tprd{X:\set} (X\rightarrow X)\rightarrow X\rightarrow X
\end{equation}
admitting constructors
\begin{align*}
	\zerof &\defeq \lam{\,X\,h\,x}x\;:\;\natf\\
	\sucf &\defeq\lam{\,n\,X\,h\,x}\;h\,(n\,X\,h\,x)\,:\,\natf\to\natf
\end{align*}
and a recursor
\begin{align*}
\recnf \defeq&\;\lam{\,X\,h\,x\,n}nXhx\;:\;\tprd{X:\set}(X\to X)\to X\to\natf\to X.
\end{align*}
These satisfy the $\beta$-rules
\[\recnf\,h\,x\,\zerof \equiv x \qtext{and} 
\recnf\,h\,x\,(\sucf\,n) \equiv h(\recnf\,h\,x)\]
for $X:\set$, $h:X\to X$ and $x:X$, but not the \emph{$\eta$-rule},
which states that $\recnf\,h\,x$ is uniquely determined in the sense that 
\[
\big(\, f(\zerof)=x\,\wedge\, f\circ\sucf=h\circ f\,\big) \to\, \big(\,f\,=\,\recnf \,h\,x\,\big)
\]
for all $f:\natf\to X$ (and writing $\wedge$ in place of $\times$ for propositions).

On the other hand, $\N$ can be categorically characterized as the initial algebra of the functor
%
%
%
functor $T:\uset\to\uset$ given by
\begin{equation*}
\begin{split}
T(X)&\defeq X+\unit\,,\\
T(f)&\defeq \recp\,(\inl\circ f)\,\inr\quad\text{for }f:X\to Y\,.
\end{split}
\end{equation*}
Here $\unit$ is the unit type from \eqref{eq:true}.

Instantiating the type in Theorem~\ref{thm:initial} we get the type
%
%
\[
\tsm{\phi: \tprd{A:\tsm{X:\set}X+\unit\to X}\,\fst A}\,
\mathsf{Lim}(\phi)
\]
for the underlying set of the initial algebra, and it turns out that the index type of the sum
is equivalent to $\natf$:
\begin{align*}
\prod_{A:\sum_{X:\set}X+\unit\to X}\,\fst A
&\simeq\, \prod_{X:\set}(X+\unit\to X)\to X\\
&\simeq\, \prod_{X:\set}(X\to X)\times X\to X\\
& \simeq\;\natf
\end{align*}
We compose the predicate $\mathsf{Lim}$ with this equivalence to get a description of the initial algebra directly as a subtype $\mathsf{Lim}'$ of the System F encoding:
\begin{align}
\N\,\defeq\,&\tsm{\nu:\natf}\,\mathsf{Lim}'(\nu)\quad\text{with}\nonumber\\[-3mm]
\mathsf{Lim}'(\nu)\,\defeq\,&\prod_{\substack{X:\set\\Y:\set}}\,\prod_{\substack{x:X\\y:Y}}\,\prod_{\substack{h:X\to X\\k:Y\to Y\\ 
		f: X\to Y}}
\begin{array}{l}
\\[2mm]
(f(x)=y\wedge f\circ h=k\circ f)\\
\quad\to f(\nu_Xh\,x)=\nu_Yk\,y
\end{array}\label{eq:mprime}
\end{align}
Observe that the triples $(X,h,x)$ and $(Y,k,y)$ in the definition of $\mathsf{Lim}'(\nu)$ can be coerced into $T$-algebras $(X,\,\recp h\,(\lambda z.\, x))$ and $(Y,\,\recp k\,(\lambda z.\,y))$. Leaving this coercion implicit (as we shall do from now on), $\mathsf{Lim'}(v)$ can be read as saying that we have \[f(\nu_Xh\,x)=\nu_Yk\,y\] for every $T$-algebra morphism 
\[f:(X,h,x)\to(Y,k,y). \]

It is easy to see that the predicate $\mathsf{Lim}'$ satisfies
\[
\mathsf{Lim}'(\zerof)\qtext{and} \tprd{\nu:\natf}\, \mathsf{Lim}'(\nu)\to \mathsf{Lim}'(\sucf \nu),
\]
whence the System F constructors $\zerof$ and $\sucf$ can be restricted to operations
\[
0:\N\qtext{and}\suc:\N\to\N
\]
about which it is sufficient to know that they behave like $\zerof$ and $\sucf$ on the first components of dependent pairs.
The recursor $\recnf$ also restricts to $\N$ in a straightforward manner
\begin{align*}
\recn \defeq\;\lam{\,X\,h\,x\,n}&\fst(n)X\,h\,x\\&:\,\itprd{X:\set}(X\to X)\to X\to\N\to X\,,
\end{align*}
%
%
%
and we have the following theorem.
\begin{theorem} The encodings of $\,\N$, $0$, and $\suc$ given above satisfy
	\begin{enumerate}
\item\label{natthm2'} \emph{definitional $\beta$-rules} saying that
\[\recn\,h\,x\,0 \equiv x \qtext{and} \recn\,h\,x\,(\suc\,n) \equiv h\,(\recn\,h\,x)\]
for all $X:\set$, $x:X$, $h:X\to X$ and $n:\N$, and 
\item\label{natthm4'} a \emph{propositional $\eta$-rule} which states that 
\[
\big(\,f(0)=x\wedge f\!\circ\!\suc=h\circ f\,\big)\ \to\ f=\recn \,h\,x
\]
for all $X:\set$, $x:X$, $h:X\to X$ and $f:\N\to X$.
\end{enumerate}
\end{theorem}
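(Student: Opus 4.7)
The plan is to follow the strategy used for the binary sum $A+B$ in Section~\ref{sec:basic}: derive the $\beta$-rules by unfolding definitions, and then split the $\eta$-rule into two stages, first a special case asserting that the recursor applied to the constructors is the identity on $\N$, and then the general statement.

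The $\beta$-rules should be definitional and follow directly by unfolding $\recn\,h\,x\,n \equiv \fst(n)\,X\,h\,x$ at $n = 0$ and $n = \suc\,m$, since the constructors $0$ and $\suc$ restrict $\zerof$ and $\sucf$, whose computation behaviors match the required equations definitionally on the first components of dependent pairs.

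For the $\eta$-rule, the first stage is to establish the special case $\recn\,\suc\,0 = \mathrm{id}_\N : \N\to\N$. By function extensionality and $\Sigma$-induction it suffices to show $\nu_\N\,\suc\,0 = (\nu, p)$ for each $(\nu, p) : \N$. Using that $\N \emb \natf$ is an embedding (since $\mathsf{Lim}'(\nu)$ is a proposition), this reduces to $\fst(\nu_\N\,\suc\,0) = \nu$ in $\natf$, and by function extensionality it reduces further to checking
\[
\fst(\nu_\N\,\suc\,0)_X\,h\,x \;=\; \nu_X\,h\,x
\]
for all $X:\set$, $h:X\to X$, $x:X$. The key observation is that the left-hand side unfolds to $\recn\,h\,x\,(\nu_\N\,\suc\,0)$, and that by the $\beta$-rules just derived, $\recn\,h\,x : \N \to X$ is a $T$-algebra morphism $(\N, \suc, 0) \to (X, h, x)$; hence the naturality condition $p \in \mathsf{Lim}'(\nu)$ applied to it yields exactly the required equation. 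Once the special case is in hand, the general $\eta$-rule follows by a short computation mirroring the end of the $A+B$ proof: given $f : \N \to X$ satisfying the algebra-morphism conditions, applying $p$ to $f$ and then substituting the special case gives $f((\nu, p)) = f(\nu_\N\,\suc\,0) = \nu_X\,h\,x \equiv \recn\,h\,x\,(\nu, p)$, and the conclusion follows by function extensionality.

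The step expected to require the most care is the naturality application: one must ensure that $\N$ itself is usable as the index $X:\set$ in $\mathsf{Lim}'(\nu)$, which relies on $\N$ being a set (it is a $\Sigma$-type of the set $\natf$ with a family of propositions, hence in $\set$ by impredicativity), and that the implicit coercion of triples $(X,h,x)$ to genuine $T$-algebras indicated after equation~\eqref{eq:mprime} behaves as expected when instantiated at $(\N, \suc, 0)$. Both are routine but the whole argument hinges on them, as they are what allows the naturality witness $p$ to be fed its own carrier.
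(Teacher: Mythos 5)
Your proposal is correct and follows essentially the same route as the paper: definitional $\beta$ by unfolding, then the special case $\recn\,\suc\,0 = \mathrm{id}_\N$ via function extensionality, $\Sigma$-induction, and the fact that the first projection is an embedding, with the naturality witness $p$ applied to the $T$-algebra morphism $\recn\,h\,x : (\N,\suc,0)\to(X,h,x)$, and finally the general case by applying $p$ to the hypothesized algebra morphism $f$ and invoking the special case. Your closing remark correctly identifies the crucial point --- that $\N$ is itself a set (by impredicativity of $\set$) and hence a legitimate instance of the quantifier in $\mathsf{Lim}'$ --- which the paper also emphasizes.
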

\begin{proof}
The first claim is straightforward.

For the second claim we first show that 
\begin{equation}\label{eq:wneta}
\recn\,\suc\,0 = \mathrm{id}_{\N}.
\end{equation}
By function extensionality and $\Sigma$-induction it is enough to show that
$n\,\N\,\suc\,0=(n,p)$ for all $n\in\natf$ and $p\in \mathsf{Lim}'(n)$. Since $\natf\emb\N$ is an embedding this reduces to $\fst(n\,\N\suc\,0)=n$, and by function extensionality again it suffices to show that
\[\fst(n_\N\,\suc\,0)_Xf\,x=n_Xf\,x
\]
for all $X:\set$, $f:X\to X$ and $x:X$. Since $\recn\,f\,x:\N\to X$ is a morphism of $T$-algebras
$(N,\suc,0)\to(X,f,x)$, $p$ gives an equality between the right hand side and $\recn\,f\,x(n_\N\,\suc\,0)$ (see remark after~\eqref{eq:mprime}), which in turn is definitionally equal to the left hand side.

Now let $X:\set$, $x:X$, $h:X\to X$ and $f:\N\to X$ such that $f(0)=x$ and $f\circ\suc=h\circ f$.
Given $n:\N$ we argue 
\begin{align*}
f\,n\,&=f(\recn\,\suc\,0\,n)&&\text{(by }\eqref{eq:wneta})\\
&\equiv f(\fst(n)_\N\suc\,0)\\
&= \fst(n)_Xh\,x&&
\begin{array}{@{}l}
\text{(since $f$ is a $T$-algebra morphism}\\
\text{from $(N,\suc,0)$ to $(X,h,x)$)}
\end{array}
\\
&\equiv\recn\,h\,x\,n,
\end{align*}
which by function extensionality proves the claim.
\end{proof}
Apart from the distinction of definitional and propositional equality, the preceding theorem says precisely that the $F$-algebra $(\N,\suc,0)$ is initial in $\talg$, where for any $T$-algebra $(C,f,c)$ the function underlying unique mediating morphism \[(\N,\suc,0)\to(C,f,c)\] is given by $\recn\,f\,c$.
We reiterate that this assertion is to be understood as a statement \emph{in type theory} (as in \cite{ags:it-hott,AGS2}), in particular 
the uniqueness of the mediating morphism is up to propositional equality.
%

As before, the \emph{induction principle} for \N follows from recursion together with the $\eta$-rule (by \emph{ibid.}).

\subsection{Consequences of the existence of initial algebras}

We note that the sets in this impredicative system are necessarily quite non-classical -- as has been observed in related systems by several previous authors \cite{girardoriginal,10.1007/3-540-13346-1_7, PittsAM:polist, HYLAND1988135}.

\begin{corollary} The initial algebra theorem \ref{thm:initial} implies the following facts about the category $\uset$:
\begin{enumerate}
\item Every endofunctor $F : \uset \to \uset$ has a (least) fixed point, 
$$F(X) \cong X.$$
(By Lambek's lemma.)

\item There is no (covariant) powerset functor $P : \uset \to \uset$.\\
(Otherwise we would have $P(X)\cong X$ for some set $X$.)

\item The law of excluded middle fails for sets.\\
(Otherwise we would have $P(X)=2^X$.)
\end{enumerate}
\end{corollary}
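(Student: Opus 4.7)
The plan is to dispatch the three parts in sequence, using Theorem~\ref{thm:initial} as the common foundation and classical categorical reasoning for each consequence.

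For (1), the plan is a direct application of Lambek's lemma. Given $F:\uset\to\uset$, Theorem~\ref{thm:initial} yields an initial $F$-algebra $(I,\,i:FI\to I)$. I would then consider the auxiliary $F$-algebra $(FI,\,Fi:FFI\to FI)$: by initiality of $(I,i)$ there is a (unique) $F$-algebra map $j:(I,i)\to(FI,Fi)$, i.e.\ a function $j:I\to FI$ with $j\circ i = Fi\circ Fj$. Then $i\circ j:(I,i)\to(I,i)$ is an endo-algebra map on an initial object, so by uniqueness $i\circ j = \mathrm{id}_I$; this in turn yields $j\circ i = Fi\circ Fj = F(i\circ j) = F(\mathrm{id}_I) = \mathrm{id}_{FI}$. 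Hence $i$ is an isomorphism $FI\cong I$, as required. All equalities here should be understood as propositional equalities, provable from the uniqueness clause in the initial algebra property.

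For (2), the plan is to argue by contradiction via Cantor's diagonal. Supposing a covariant powerset functor $P:\uset\to\uset$ exists (intended so that $P(X)$ is a set of ``subsets'' of $X$, and morphisms act by direct image), part (1) produces a set $X$ together with an isomorphism $\varphi:X\xrightarrow{\sim}P(X)$. The main step is then Cantor's theorem inside our type theory: for any map $f:X\to P(X)$, the predicate $A:\equiv \lam{x}\neg (x\in f(x))$ is a ``subset'' of $X$ not in the image of $f$, since any witness $x_0$ with $f(x_0)=A$ would give $x_0\in A \Leftrightarrow x_0\notin A$, a contradiction. Since an isomorphism is in particular surjective, $\varphi$ contradicts this. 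The delicate point here is fixing a precise interpretation of ``covariant powerset functor'' (e.g., $P(X)\equiv X\to\Prop$ with functoriality by direct image) under which the Cantor argument is formalisable; the argument should go through under any reasonable choice, so I would state the proof at that level of generality.

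For (3), the plan is to argue that \LEM{} for sets implies the existence of a covariant powerset functor, which by (2) is impossible. Under $\mathsf{LEM}$, every proposition is equivalent to one of the two boolean truth values, so $\Prop\simeq\bool$; hence $P(X)\defeq X\to\bool$ takes sets to sets and classifies subsets. The covariant action is the direct image, which under $\mathsf{LEM}$ can be defined without propositional truncation: for $f:X\to Y$ and $A:X\to\bool$, set $(Pf)(A)(y)\defeq\neg\bigl(\tprd{x:X}f(x)=y\to A(x)=\mathsf{0}_{\bool}\bigr)$, and verify the functor laws using decidability. This contradicts (2), so $\mathsf{LEM}$ cannot hold for sets.

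The main obstacle I would expect is in part~(3), pinning down exactly which classical assumption is needed to upgrade the ``decidable powerset'' $X\to\bool$ into a covariant endofunctor of $\uset$, and checking the functor laws hold propositionally (using function extensionality). Parts~(1) and~(2) are essentially Lambek and Cantor and should be mechanical once the categorical infrastructure from the previous section is in place.
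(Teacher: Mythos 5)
Your proposal is correct and follows essentially the same route the paper intends: the paper gives only the parenthetical hints (Lambek's lemma for (1), a Cantor-style fixed-point contradiction for (2), and ``LEM would make $2^{(-)}$ a powerset functor'' for (3)), and your write-up simply fills in those standard arguments, including the constructively valid diagonal $\lam{x}\neg(x\in f(x))$ and the small, decidable powerset $X\to\bool$ under \LEM{}. No gaps beyond the same imprecision the paper itself leaves (what exactly counts as a ``covariant powerset functor''), which you explicitly flag and handle reasonably.
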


\paragraph*{Other 0-types} We also remark that, using Theorem \ref{thm:initial}, one can encode initial algebras for other \emph{polynomial} endofunctors on $\uset$ in just the same way, and thus obtain all (set-level) W-types (see \cite{ags:it-hott,AGS2}). It follows that the internal category $\uset$ of sets is a (complete!) \emph{predicative topos} in roughly the sense of \cite{2012arXiv1207.0959V}, i.e.\ an LCC pretopos with W-types.

\section{Some 1-Types}\label{sec:1types}

The foregoing development of impredicative encodings of inductive \emph{sets} is satisfactory as such, but in the full system of HoTT one  also has higher n-types, and for these there are corresponding notions of inductive type, known as \emph{higher inductive types} (HITs).  Examples include some basic spaces such as the spheres $S^n$, homotopy colimits, $n$-truncations $\trunc{n}{X}$, and many others; see \cite[ch.\ 6]{HoTTbook1}.\footnote{Strictly speaking, some $\set$-level inductive types, such as quotients, are also HITs, in that they involve primitive identity paths.  Even the propositional truncation is a HIT in this sense.}

In this section, we give an example of an impredicative encoding of a basic ``1-HIT'', namely the 1-sphere $\mathsf{S}^1$ \cite{ls:pi1s1}. We give fewer details of the (more intricate) proofs, but  provide enough specifics to hopefully give the reader a sense of what is involved in such higher encodings.  

The 1-sphere $\mathsf{S}^1$ is defined as a HIT by the constructors
\begin{equation*}
\begin{aligned}
	\mathsf{base} &: \mathsf{S}^1 \\
	\mathsf{loop} &: \mathsf{base} = \mathsf{base}\,.
\end{aligned}
\end{equation*}
Its dependent eliminator, given in \cite[6.4]{HoTTbook1},
can be derived from its recursor
\begin{equation*}
	\rec{\mathsf{S}^1}\, :\, \prod_{X:\typele{1}} \prod_{x:X}\,(x=x)\to \mathsf{S}^1 \to X\,,
\end{equation*}
together with the $\beta$-rules
\begin{equation*}
\begin{aligned}
	\rec{\mathsf{S}^1}\mathsf{base} &\equiv x\,,\\
	\rec{\mathsf{S}^1}(\mathsf{loop})
    &= p\,,
\end{aligned}
\end{equation*}
for $X:\typele{1}$, $x:X$ and $p : x=x$,
and a propositional $\eta$-rule stating the uniqueness of the recursor (see \cite{Sojakova2015HigherIT}). 

We begin with the following encoding (originally proposed by Mike Shulman \cite{shulmanblog}), which is suggested by the previous `System~F~style' ones.
\begin{equation}\label{eq:S1}
	\mathsf{S}^1_F\ =\  \prod_{X:\UU}\prod_{x:X}\, (x=x)\rightarrow X\,.
\end{equation}
This has the same problem as the System F encoding of $\mathbb{N}$ \eqref{eq:NatF}, however: no uniqueness for the recursor, 
and so no induction principle.  We will remedy this in the same way as before, now restricting the $\prod_{X}$ to 1-types, and then adding naturality and a higher ``coherence condition'', reflecting the fact that $\mathsf{S}^1$ is a 1-type rather than a set.

First, to see where \eqref{eq:S1} came from, the \emph{universal property of the circle} \cite[Lemma 6.2.9]{HoTTbook1} is given by the equivalence,
\begin{equation}\label{eq:UMPS1}
(\mathsf{S}^1 \rightarrow X)\  \simeq\ \sum_{x:X}x=x\,.
\end{equation}
That is to say, maps $\mathsf{S}^1 \rightarrow X$ correspond to ``loops'' $x=x$ in $X$, with various basepoints $x:X$.  

Thus by the same reasoning as in the Basic Lemma, if we had $\mathsf{S}^1$, there would be an embedding,
\begin{align*}
	\mathsf{S}^1\ &\emb\ \prod_{X: \typele{1}}\, (\mathsf{S}^1 \rightarrow X) \rightarrow X\\
			&\simeq\  \prod_{X: \typele{1}}\, (\sum_{x:X}x=x) \rightarrow X\qquad \qquad(\text{by }\eqref{eq:UMPS1})\\
			&\simeq\  \prod_{X: \typele{1}}\prod_{x:X}\, (x=x) \rightarrow X\,.
\end{align*}
This gives us \eqref{eq:S1} as a starting point, but we again need to refine the encoding to a suitable subtype.    

Let us write 
\[
\Omega(X)\ :\equiv\ \sum_{x:X}x=x
\]
 for the (unbased) ``loopspace functor'', 
 so that \eqref{eq:S1} is essentially the type of ``families of maps $\Omega{X} \to X$''.  Let $\alpha : \prod_{X:\typele{1}} \Omega{X} \rightarrow X$
be such a family, and consider the following naturality square, corresponding to \eqref{eq:naturality}
\begin{equation*}\label{eq:S1naturality}
\xymatrix{
\Omega{X} \ar[d]_{\alpha_X} \ar[r]^{\Omega{f}} &   \ar[d]^{\alpha_Y}  \Omega{Y}  \\
X \ar[r]_{f} & Y\,,
}
\end{equation*}
where $\Omega{f}$ is defined in the usual way as the action of $f$ on identity paths in $X$ \cite[2.2]{HoTTbook1}.  Since we now have a 1-type $Y$ in the target, the naturality condition 
\begin{equation*}\label{eq:encodeS1}
f\circ \alpha_X\  =\ \alpha_Y \circ \Omega{f}
\end{equation*}
is not a proposition, but a set.  We will add a ``coherence condition'' in order to cut it down further.  
Indeed, consider a naturality term 
\[
\vartheta : \prod_{X,Y:\typele{1}} \prod_{f:X\rightarrow Y} f\circ \alpha_X\  =\ \alpha_Y \circ \Omega{f}\,. 
\]
Then for any $f:X\rightarrow Y$ and $g:Y\rightarrow Z$, we have identifications 
\begin{align*}
\vartheta_f :&\ f\circ \alpha_X\, =\, \alpha_Y \circ \Omega{f}\,,\\
\vartheta_g :&\ g\circ \alpha_Y\,=\,\alpha_Z \circ \Omega{g}\,,
\end{align*}
depicted
\begin{equation*}
\xymatrix{
\Omega{X} \ar[d]_{\alpha_X} \ar[r]^{\Omega{f}} \ar@{}[rd]|{\vartheta_f} 
	& \ar[d]|{\alpha_Y} \Omega{Y}  \ar[r]^{\Omega{g}} \ar@{}[rd]|{\vartheta_g} 
	& \ar[d]^{\alpha_Z} \Omega{Z} \\
X \ar[r]_{f} & Y  \ar[r]_{g} & Z\,,
}
\end{equation*}
as well as one
\begin{align*}
\vartheta_{(g\circ f)} :&\ (g\circ f)\circ \alpha_X\  =\  \alpha_Z \circ \Omega{(g\circ f)}\,.
\end{align*}
Informally, the coherence condition we seek is then 
\[
\text{``}\ \vartheta_{(g\circ f)} = \vartheta_g \cdot \vartheta_f \ \text{''}
\]
but of course, this does not type-check. Instead, we have the following ``pasting scheme'' 
\begin{equation*}
\xymatrix{
\Omega{X} \ar[d]_{\alpha_X} \ar[r]^{\Omega{f}} \ar@{}[rd]|{\vartheta_f}  
			\ar@/^2pc/ [rr]^{\Omega{(g\circ f)}} \ar@{{}{ }{}}@/^1pc/ [rr]|{\varphi_{f,g}} 
		& \ar[d]|{\alpha_Y} \Omega{Y}   \ar[r]^{\Omega{g}} \ar@{}[rd]|{\vartheta_g}  & \ar[d]^{\alpha_Z} \Omega{Z}  \\
X \ar[r]_{f} \ar@/_2pc/ [rr]_{g\circ f} & Y  \ar[r]_{g} & Z\,,
}
\end{equation*} 
where $\varphi_{f,g} : \Omega g \circ\Omega f\, =\, \Omega(g\circ f) $ is the provable composition law of the (pseudo-) functor $\Omega$.  
This gives rise to the well-typed \emph{coherence condition},
\begin{equation}\label{eq:coherence}
\vartheta_{(g\circ f)} = (\alpha_Z *\varphi_{f,g})\cdot (\vartheta_g * \Omega{f}) \cdot (g * \vartheta_f) \,,
\end{equation} 
where $q\cdot p$ is composition of the paths $p, q$, and $f*p$ (resp.\ $p*f$) is the ``whiskering'' of a map $f$ and a path $p$ (see \cite[2.1]{HoTTbook1}). 

Note that \eqref{eq:coherence} is indeed a proposition, because it is an identity between identities $(g\circ f)\circ\alpha_X\, =\, \alpha_Z\circ\Omega(g\circ f)$ in the 1-type $\Omega{X} \to Z$.

There is also a \emph{unit coherence condition}, which has the simple form
\[
\vartheta_{(1_X)} = \refl{\alpha_X}\,
\]
in the 1-type $\Omega{X} \to X$ (taking into account the fact that ``$\Omega$ preserves $\refl{}$'').

Now for $\alpha:\prod_{X:\typele{1}} \Omega{X}\rightarrow X$, let us write
\[
\mathrm{Nat}(\alpha)\ :\equiv\ \prod_{X,Y:\typele{1}} \prod_{f:X\rightarrow Y} f\circ \alpha_X = \alpha_Y \circ \Omega{f}
\]
for the type of ``naturality structures'' on the family of maps~$\alpha$, and for $\vartheta : \mathrm{Nat}(\alpha)$, let us write
\begin{align*}
\mathrm{Coh}(\vartheta)\,:\equiv\,
	&\prod_{X,Y, Z:\typele{1}}
    \prod_{f:X\rightarrow Y} 
    \prod_{g : Y\rightarrow Z}\,  \big(\vartheta_{(1_X)} = \refl{\alpha_X}\big)\\
	&\quad\times\, \big(\vartheta_{(g\circ f)} = (\alpha_Z *\varphi_{f,g})\cdot (\vartheta_g * \Omega{f}) \cdot (g * \vartheta_f)\big)
\end{align*}
for the type of ``coherence conditions'' on the natural transformation~$\vartheta$.

The sharper encoding of $\mathsf{S}^1$ that we seek is then
\begin{equation}\label{eq:finalencodeS1}
	\mathsf{S}^1\, :\equiv \sum_{\alpha : (\mathsf{S}^1)^*}
	\sum_{\vartheta: \mathrm{Nat}(\alpha)} \, 	\mathrm{Coh}(\vartheta)
\end{equation}
where, as before, 
\begin{align*}
(\mathsf{S}^1)^* 
&:\equiv \prod_{X:\typele{1}} \prod_{x:X}(x=x)\rightarrow X\\
&\simeq\ \prod_{X:\typele{1}} (\sum_{x:X} x=x )\rightarrow X\\
&\equiv\ \prod_{X:\typele{1}} \Omega{X}\rightarrow X\,.
\end{align*}
The type \eqref{eq:finalencodeS1} can thus be understood as consisting of those natural transformations $\Omega{X} \to X$ that are coherent (also called \emph{pseudo-natural transformations}).

The constructors $\mathsf{base}$ and $\mathsf{loop}$ can then be defined:
\begin{equation*}\label{eq:S1inj}
\begin{aligned}
	\mathsf{base}^* &:\equiv\ \lam{Xx\,p}\, x  \\
		&: \prod_{X:\typele{1}} \prod_{x:X}(x=x)\rightarrow X\\
	\mathsf{base}	&:\equiv\ \big(\,\mathsf{base}^*,\, (\,\lambda XYf.\,\refl{},\, \lambda XYZfg.\,(\,\refl{},\refl{}\,)\,)\,\big) \\
		&: \ \sum_{\alpha : (\mathsf{S}^1)^*} \sum_{\vartheta: \mathrm{Nat}(\alpha)} \, \mathrm{Coh}(\vartheta)\\
	\mathsf{loop}^* &:\equiv\ \funext(\lam{X\,x\,p}\, p) \\
		&: \mathsf{base}^* = \mathsf{base}^* \\
	\mathsf{loop} &:\equiv\ \big(\mathsf{loop}^*, (a, (b, c))\big)\\ 
		&: \ \mathsf{base} = \mathsf{base}\,.
\end{aligned}
\end{equation*}
where the subterms $a, b, c$ in $\mathsf{loop}$ are certain canonical higher coherences, such as $\mathsf{inv} : \vartheta\cdot\vartheta^{-1} = \refl{}$, the details of which which we omit.
Finally, the recursor $\mathsf{rec}_{\mathsf{S}^1}$ is given by:
\begin{equation*}\label{eq:S1rec}
\begin{aligned}
\mathsf{rec}_{\mathsf{S}^1} &:\equiv\ \lam{\,Cz\,p\,\sigma}(\fst\,\fst\,\sigma)_C\,z\,p \\
&: \prod_{C:\typele{1}}\prod_{z:C} (z=z)\to \mathsf{S}^1\to C
\end{aligned}
\end{equation*}

\begin{theorem}
The encoding \eqref{eq:finalencodeS1} of $\mathsf{S}^1$, with the structure $\mathsf{base}$ and $\mathsf{loop}$ just given, satifies the rules for the corresponding higher inductive type, including the dependent elimination rule, with respect to (families of) 1-types.
\end{theorem}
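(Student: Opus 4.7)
The plan is to follow the same template as for $\N$ in Section~\ref{sec:nat}, raised by one dimension. I would first dispatch the $\beta$-rules, then prove a weak $\eta$ of the form $\mathsf{rec}_{\mathsf{S}^1}\,\mathsf{base}\,\mathsf{loop} = \mathrm{id}_{\mathsf{S}^1}$, then bootstrap this to the general $\eta$ using naturality, and finally appeal to~\cite{ags:it-hott,AGS2} to derive dependent elimination from recursion plus $\eta$. For the $\beta$-rule on $\mathsf{base}$, unfolding $\mathsf{rec}_{\mathsf{S}^1}\,z\,p\,\mathsf{base}$ yields $\mathsf{base}^*_C\,z\,p \equiv z$ definitionally. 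For $\mathsf{loop}$, I would compute $\apfunc{\mathsf{rec}_{\mathsf{S}^1}\,z\,p}(\mathsf{loop})$ by pushing $\apfunc{}$ through the two first-projections of the triple defining $\mathsf{loop}$, leaving $\happly$ applied to $\mathsf{loop}^* \equiv \funext(\lam{X\,x\,p}p)$ at the argument $(C,z,p)$, which reduces propositionally to $p$ by the $\beta$-rule for $\funext$.

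The heart of the proof is the weak $\eta$-rule. Given $\sigma \equiv (\alpha,\vartheta,c):\mathsf{S}^1$, I need an identification $\alpha_{\mathsf{S}^1}\,\mathsf{base}\,\mathsf{loop} = \sigma$ in the threefold $\Sigma$-type. I would unpack this into three stages. First, on the $\alpha$-component, I would set $f \defeq \mathsf{rec}_{\mathsf{S}^1}\,x\,p : \mathsf{S}^1 \to X$ for arbitrary $X:\typele{1}$, $x:X$, $p:x=x$, and apply $\vartheta_f : f\circ\alpha_{\mathsf{S}^1} = \alpha_X\circ \Omega f$ to the element $(\mathsf{base},\mathsf{loop}):\Omega\mathsf{S}^1$. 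Using the $\beta$-rules established above to compute $\Omega f(\mathsf{base},\mathsf{loop}) = (x,p)$, this yields $f(\alpha_{\mathsf{S}^1}\,\mathsf{base}\,\mathsf{loop}) = \alpha_X\,x\,p$, and unfolding $f$ on the left gives exactly the pointwise equality of first components, which promotes to a path in $(\mathsf{S}^1)^*$ by function extensionality. Second, on the $\vartheta$-component, transported along the path just produced, I would need to identify the naturality structure of $\alpha_{\mathsf{S}^1}\,\mathsf{base}\,\mathsf{loop}$ with $\vartheta$ up to a set-level equality; this is the step that essentially uses the coherence datum $c$, by applying it to pairs of composable morphisms in a configuration analogous to how naturality was used one level down. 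Third, the $\mathrm{Coh}$-component is a propositional equality in $\mathrm{Coh}(\vartheta)$ and therefore automatic, since $\mathrm{Coh}(\vartheta)$ is a product of identities between paths in the $1$-type $\Omega X \to Z$.

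From the weak $\eta$ I would derive the general $\eta$-rule exactly as in the $\N$ case: for $f:\mathsf{S}^1\to C$ and $\sigma \equiv (\alpha,\vartheta,c):\mathsf{S}^1$, compute
\begin{align*}
\mathsf{rec}_{\mathsf{S}^1}\,(f\,\mathsf{base})\,(\apfunc{f}\,\mathsf{loop})\,\sigma
&\equiv \alpha_C\,(f\,\mathsf{base})\,(\apfunc{f}\,\mathsf{loop}) \\
&= f\bigl(\alpha_{\mathsf{S}^1}\,\mathsf{base}\,\mathsf{loop}\bigr) \\
&= f\,\sigma,
\end{align*}
where the middle step is $\vartheta_f$ applied to $(\mathsf{base},\mathsf{loop})$ together with the $\beta$-rules, and the last step is the weak $\eta$. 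The dependent eliminator, with the expected propositional $\beta$-rules, then follows from the recursor and this propositional $\eta$ by the general principle of~\cite{ags:it-hott,AGS2}, applied at the level of $1$-types.

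The main obstacle is the second stage of the weak $\eta$, i.e.\ identifying the naturality component of $\alpha_{\mathsf{S}^1}\,\mathsf{base}\,\mathsf{loop}$ with $\vartheta$ after transport. Because $\mathrm{Nat}(\alpha)$ is a set rather than a proposition, this is a genuine piece of data that must be produced, and its construction is a two-dimensional coherence calculation rather than a routine application of naturality; it is precisely to make this calculation go through that the pseudo-naturality datum $c$ was incorporated into the encoding~\eqref{eq:finalencodeS1}. Once this coherence bookkeeping is organized, the remaining verifications are diagrammatic and parallel the $0$-dimensional argument of Section~\ref{sec:nat}.
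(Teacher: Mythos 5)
Your proposal follows essentially the same route as the paper, which itself only sketches this proof: establish the $\beta$-rules, prove a weak and then general $\eta$-rule by the same naturality argument used for $\N$ (with the extra dimension --- the set-level identification of naturality structures --- handled by the coherence datum, exactly the crux you identify), and then invoke the general principle that recursion plus uniqueness yields dependent elimination. The one correction is the citation for that last step: for this higher-inductive case the relevant results are Sojakova's on homotopy-initiality of HIT algebras rather than the AGS results for ordinary inductive types, since at the 1-type level the uniqueness principle must also identify the 2-cells witnessing preservation of $\mathsf{base}$ and $\mathsf{loop}$, not just the underlying functions.
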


For the proof, as in previous cases one shows first that the encoding implies the recursion rules for $\mathsf{S}^1$, including the $\eta$-rule.  Then, by the results in \cite{sojakova-thesis,Sojakova2015HigherIT}, it also admits dependent elimination.

Observe that our encoding can be motivated informally by a version of the Basic Lemma \ref{lem:basic} for 1-types (i.e.\ the Yoneda lemma for bicategories) as follows, where we write $I$ for the identity functor on the (bi)category of 1-types, and $\mathbf{pNat}$ for the groupoid of pseudo-natural transformations between pseudo-functors,
\begin{align*}
\mathsf{S}^1 \equiv I(\mathsf{S}^1)\ 
&\simeq\ \mathbf{pNat}(\,(-)^{\mathsf{S}^1},\, I\,)&&\text{by Yoneda}\\
&\simeq\ \mathbf{pNat}(\,\Omega\,,\, I\,)&&\text{by \eqref{eq:UMPS1}}\\
&\equiv\ \sum_{\alpha : (\mathsf{S}^1)^*} 
\sum_{\vartheta: \mathrm{Nat}(\alpha)}
\mathrm{Coh}(\vartheta)\,. &&
\end{align*}

\paragraph*{Other 1-types}

Similar methods can be used to give the 1-truncation $\truncation{A}_1$ of a general type $A$, in the analogous form

\begin{equation*}\label{eq:1truncation}
\trunc{1}{A}\, :\equiv\, \sum_{\alpha : A^*}
\sum_{\vartheta: \mathrm{Nat}(\alpha)} 	\mathrm{Coh}(\vartheta)
\end{equation*}
where, as in \eqref{eq:settrunc}, 
\begin{align*}
A^*\, &:\equiv \prod_{X:\typele{1}} (A\to X)\rightarrow X\,.
\end{align*}

By analogy to the encoding of set quotients, we expect that one can also use the foregoing method to encode the groupoid quotient (\cite{aks:rezk}: ``Rezk completion'') $BG$ of a (pre)groupoid $G$ (cf.~\cite{Sojakova2015HigherIT} for a type theoretic description). 
In principle, one should also be able to encode some ``higher W-types'', specified by families $x:A \vdash B(x)$ of 1-types $B(x)$, such as occur in the theory of combinatorial species \cite{JOYAL19811}.  But we have not investigated these possibilities.

\section{Conclusion}\label{sec:limits}

\paragraph*{Limitations} 
The methods used here have certain limitations which we can now address. The first is the familiar (in type theory) issue of ``large versus small elimination'', i.e.\ elimination into types in higher universes. The encodings given here have elimination rules only with respect to types in the impredicative universe $\UU$.  That said, we made little use of the higher universes $\UU_0:\UU_1:\UU_2: ...$ and could simply have omitted them (with some corresponding adjustments). The resulting system would be similar to the original Calculus of Constructions  \cite{COQUAND198895}, but with the addition of primitive identity-types.  The issue of eliminating into $\UU$ itself of course remains, as it does for the CoC.

Restricting the system to the subuniverse $\set$ of 0-types -- either by definitions as was done here, or by the addition of ``extensionality'' principles such as the uniqueness of identity proofs as in \cite{altenkirch+07ott} or even the reflection rule as in \cite{Dybjer:1997} -- seems to give a very satisfactory type theory for sets, with quotient types, W-types, etc., within a framework with very few primitive operations.    

In an intensional system admitting higher $n$-types, there is the interesting, and apparently new, question of whether encoded $n$-types can eliminate into ($n+k$)-types (note that the truncation levels are cumulative, so the converse is immediate).  
It seems that this is possible in some cases, but  more work needs to be done to understand which ones. 
 To give a brief indication, we have already seen the encodings $\emptyt\defeq\prod_{X:\UU}X$
and $ \unit\defeq\prod_{p:\Prop} p\to p$\,, but why does the first one require a product over all of $\UU$, while $\Prop$ suffices for the second?  The issue seems to be related to the preservation of certain (co)limits by the inclusions $\typele{n} \emb\typele{n+1}$ (of course, the truncations are left adjoints to these).  It seems that some encodings are ``stable'' in the sense that they also eliminate into higher $n$-types; for example, our encoding of $\N$ as a set should be stable, as can be seen by considering the fact that it is a W-type \cite{AGS2}, constructed in $\set$, and the truncation levels are closed under W-types \cite{danielssonW}.  
Similar considerations apply to other inductive 0-types encoded by the method in \Cref{sec:generalz}.  
Indeed, we could also encode the natural numbers $\N$ as a 1-type, but then show that the result is actually a 0-type and thus equivalent to the original encoding, which therefore now eliminates into all 1-types.  In this way, it seems possible to establish that some $n$-types can indeed eliminate into ($n+k$)-types.  
A similar argument would seem to apply to the encoding of  $\mathsf{S}^1$ given in \Cref{sec:1types}, which should therefore also be stable. An encoding that is apparently \emph{not} stable would be a coequalizer constructed in $\set$; constructing it in a higher $\typele{n}$ should give a homotopy colimit, which need not be a set.

The previous considerations are particularly tentative, however, because of the combinatorial difficulty of specifying the relevant coherence conditions for higher $n$-types.   Indeed, the current methods are restricted to encodings of $n$-types only for very low $n$.  More work, and perhaps some new insight, is needed to specify the coherence conditions for higher $n$-types, and for untruncated types, such as the $2$-sphere $\mathsf{S}^2$. 

We also recognize that, as  experience has shown,  reasoning effectively about (higher) inductive types often involves large eliminations into a (univalent) universe.
For conventional inductive types, the lack of such large elimination means that we can not compute initial algebras by iterating functors (since this would require elimination from $\N$ to $\UU$), but fortunately we don't need this kind of iteration in the impredicative setting, since we can compute initial algebras differently.

For higher inductive types the situation may be different, however; large elimination (plus univalence) seems to be essential
in the proof that $\pi_1(\mathsf{S}^1) = \Z$ (see \cite{Licata:2013:CFG:2591370.2591407}).  Thus, ultimately, the utility of such encodings of HITs may be that they serve to justify adding the associated rules globally, as was done for conventional inductive types in passing from the Calculus of Constructions to the Calculus of Inductive Constructions \cite{CoquandPaulin,PfenningPaulinMohring,Dybjer:1991}.

\paragraph*{Semantics}

A realizability model of the calculus of constructions with a proof-relevant impredicative universe is described in~\cite[Chapter~2]{Streicher-1991}.
The impredicative universe in this model is closed under small sums and identity types, and the model can be extended to include a hierarchy of predicative universes using techniques akin to those discussed in~\cite{streicher2005universes} (assuming sufficiently many set-theoretic Grothendieck universes). This establishes the consistency of our system and provides a semantic framework for impredicative encodings of $0$-types. Non-trivial 1-types can be modeled in a \emph{groupoidal realizability model}, obtained by internalizing the Hofmann-Streicher groupoid model~\cite{hs:gpd-typethy} in a realizability topos. A model of a type theory with an impredicative universe and types of arbitrary $h$-level is conjectured in a putative `realizability $\infty$-topos'. This is work in progress.

\paragraph*{Related and future work}

Most of the results presented here are treated in more detail in the third-named author's M.S. thesis \cite{speight-masters-thesis}. 

A formalization of the main results contained herein is currently underway using (an impredicative branch of) the Lean proof assistant.  The files are publicly available here \cite{formalization}.

\section*{Acknowledgements}
This material is based upon work supported by the Air Force Office of Scientific Research under MURI Grant No.~FA9550-15-1-0053.  Any opinions, findings and conclusions or recommendations expressed in this material are those of the authors and do not necessarily reflect the views of the AFOSR.
  
Thanks to Andrej Bauer and Mike Shulman for discussions of earlier versions of this work.

\bibliographystyle{alpha}
\bibliography{references.bib}

\appendix
\section{System F}\label{appendix_F}
Starting with countably many type and term variables $X,Y,Z,...$ and $x,y,z,...$ respectively, System F types and untyped terms are generated by the following BNF grammars.
\begin{align*}
A,B &::= X \;|\; A\rightarrow B \;|\; \forall X.A\\
t,u &::= x \;|\; \lambda(x:A).t \;|\; tu \;|\; \Lambda X.t \;|\; tA
\end{align*}
Typed terms in context are derived via the following rules.
		\begin{mathparpagebreakable}
			\def\premise{}
			\inferrule*[right=\rid]
			\premise
			{\oftp{\Gamma,x:A}{x}{A}}
			\and
			\inferrule*[right=$\rightarrow$-\rintro]
			{\oftp{\Gamma,\tmtp xA}{t}{B}}
			{\oftp\Gamma{\lam{x:A} t}{A\rightarrow B}}
			\\
			\inferrule*[right=$\rightarrow$-\relim]
			{\oftp\Gamma{t}{A\rightarrow B} \\
				\oftp\Gamma{u}{A}}
			{\oftp\Gamma{tu}{B}}
			\and
			\inferrule*[right=$\forall$-$\rintro^*$]
			{\oftp{\Gamma}{t}{A}}
			{\oftp{\Gamma}{\Lambda X.t}{\forall X.A}}
			\and
			\inferrule*[right=$\forall$-\relim]
			{\oftp\Gamma{t}{\forall X.A}}
			{\oftp\Gamma{tB}{A[B/X]}}
		\end{mathparpagebreakable}
		${}^*$ where $X\notin\textrm{FV}(\Gamma)$

\end{document}